\numberwithin{equation}{section}
\newcommand{\VV}{\bar{V}}
\newcommand{\TT}{\bar{T}}
\newcommand{\JJ}{\bar{J}}
\newcommand{\xx}{\bar{x}}
\newcommand{\R}{\mathbb{R}}
\newcommand{\be}{\begin{equs}}
\newcommand{\ee}{\end{equs}}
\newcommand{\N}{\mathbb{N}}
\newcommand{\Pos}{\mathbf{P}}
\newcommand{\Posref}{\mathbf{P}_{\mathrm{bounce}}}
\newcommand{\Mom}{\mathbf{M}}
\newcommand{\Rot}{R}
\newcommand{\Rotref}{R_{\mathrm{bounce}}}
\newcommand{\Th}{\Theta}
\newcommand{\K}{K}
\newtheorem{theorem}{Theorem}[section]
\newtheorem{remark}[theorem]{Remark}
\newtheorem{prop}[theorem]{Proposition}
\begin{document}

\title{Reversible and non-reversible Markov Chain Monte Carlo  algorithms for reservoir simulation  problems}

\author{ P.Dobson, I. Fursov,  G. Lord and M. Ottobre}

\address{\noindent \textsc{Paul Dobson, Maxwell Institute for Mathematical Sciences,  Department of Mathematics, Heriot-Watt University, Edinburgh EH14 4AS, UK}} 
\email{pd14@hw.ac.uk}
\address{\noindent \textsc{Ilya Fursov,   Institute of Petroleum Engineering, Heriot-Watt University, Edinburgh EH14 4AS, UK}} 
\email{i.fursov@hw.ac.uk}
\address{\noindent \textsc{Gabriel Lord, Maxwell Institute for Mathematical Sciences,  Department of Mathematics, Heriot-Watt University, Edinburgh EH14 4AS, UK}} 
\email{g.j.lord@hw.ac.uk}
\address{\noindent \textsc{Michela Ottobre, Maxwell Institute for Mathematical Sciences, Department of Mathematics, Heriot-Watt University, Edinburgh EH14 4AS, UK}} 
\email{m.ottobre@hw.ac.uk}

\begin{abstract}
We compare numerically the performance of reversible and non-reversible Markov Chain Monte Carlo algorithms for high dimensional oil reservoir problems;   because of the nature of the problem at hand, the target  measures from which we sample are supported on bounded domains. We compare two strategies to deal with bounded domains, namely reflecting proposals off the boundary and rejecting them when they fall outside of the domain.   
We observe that for complex high dimensional problems reflection mechanisms outperform rejection approaches and that the advantage of introducing non-reversibility in the Markov Chain employed for sampling is more and more visible as the dimension of the parameter space increases.  
 \vspace{5pt}
\\
{\sc Keywords.} {Markov Chain Monte Carlo methods, Non-reversible Markov Chains, Subsurface Reservoir Simulation, High-dimensional Sampling. }

\end{abstract}

\date{\today}

\maketitle

\section{Introduction}

 Markov Chain Monte Carlo (MCMC) methods are popular algorithms which allow one to sample from a given target measure $\pi$ on $\R^N$. In combination with the Bayesian inference approach, MCMC methods have been very successfully implemented in a vast range of problems in the applied sciences,   and the literature about MCMC is extensive. The purpose of  MCMC algorithms is to build a Markov chain $\{x_k\}_{k \in \N}$ which has the measure $\pi$ as invariant measure. Traditionally this is obtained by  ensuring that the chain satisfies the {\em detailed balance} condition with respect to the measure $\pi$, so that the resulting chains are {\em reversible} with respect to $\pi$. In recent years,  {\em non-reversible} MCMC algorithms have attracted a lot of attention, because of their favourable convergence and mixing properties; the literature on the matter has rapidly become large, here we refer the reader to  e.g. \cite{cit1,cit2,cit3,cit4, {Ottirr}} and references therein;  however, to the best of our knowledge most of the papers on non-reversible MCMC so far have tested this new class of algorithms only on relatively simple target measures. Furthermore, the performance of non-reversible algorithms has been discussed almost exclusively in the case in which the measure is supported on the whole of $\R^N$. 
However in many applications it is very important to be able to sample from measures supported on bounded domains. This is the case, for example, in applications to reservoir modelling and petroleum engineering, which we treat in this paper. The purpose of this paper is twofold: on the one hand we want to test the performance of non-reversible algorithms for complex, high-dimensional problems, which are completely out of reach for a full analytical treatment; on the other hand, we want to employ them for situations in which the target measure is supported in a bounded domain. 
The non-reversible algorithms that we consider in this paper are the so-called Horowitz algorithm, see \cite{hor:91},  and the Second Order Langevin- Hamiltonian Monte Carlo (SOL-HMC) algorithm, introduced in \cite{{OPSP}}. Both of them are non-reversible modifications of the well known Hamiltonian Monte Carlo (HMC)  \cite{RadfordNeal}, which is reversible. More precisely, the Horowitz algorithm is a non-reversible version of HMC and the SOL-HMC algorithm is a modification of the Horowitz algorithm, well-posed in infinite dimensions and therefore well-adapted to  sample from the high-dimensional target measures that we will treat here. 

All the algorithms we discuss in this paper need in principle no modification in order to sample from  measures supported on bounded domains.    However, if they are not suitably modified, they will employ proposal moves which fall outside of the support of the target measure. For the problem we consider, this seems to give two major drawbacks, namely i) proposal moves that fall outside of the box are immediately rejected, so the algorithm wastes time rejecting moves which one knows a priori should not be made;  \footnote{Admittedly, this observation only applies when the size of the domain is known a priori. See also \cite{Duncan}.}  ii) the likelihood function is calculated through the use of a simulator, which, further to being time-consuming to run, it will mostly fail to calculate correctly values that fall outside the support of the target. For this reason, we will consider two modifications of each one of the mentioned algorithms in which proposal moves that fall outside of the support of the target measures are either rejected or bounced off (or better, reflected off) the boundary of the support (see Section \ref{sec2}), so that the proposal is not automatically rejected as it will fall within the support.   With these observations in mind, let us come to summarize the main conclusions of the paper:
\begin{itemize}
    \item We compare rejection and reflection strategies and test them on both low and high dimensional targets and conclude that, for the problems at hand, the two strategies perform similarly when implemented in low dimensions; however in high dimensions (and for more complex problems where a proxy is employed for the likelyhood function), reflections seem more advantageous
    \item we compare the performance of HMC, Horowitz and SOL-HMC and conclude that, in high dimensions, the SOL-HMC algorithm is subtantially outperforming the other two. 
\end{itemize}
Performance of all the algorithms is compared by using the normalized Effective Sample size (nESS) as a criterion for efficiency, see Section \ref{sec4}. 
We emphasize that one of the main purposes of this paper is to demonstrate how the SOL-HMC algorithm, while being a simple modification of the HMC algorithm, which requires truly minimal code adjustment with respect to HMC,  can bring noticeable improvements with respect to the latter method; furthermore, such improvements are more noticeable when tackling high-dimensional complex target measures.  

The paper is organised as follows: in Section \ref{sec2} we recall the HMC, SOL-HMC and Horowitz algorithms, present the numerical integrators that we use in order to implement such algorithms and introduce the versions of such methods which are adapted to sampling from measures with bounded support. In Section \ref{sec3} we give details about the types of target measures used to compare the efficiency of these various algorithm and how they arise from  reservoir simulation problems. This section explains mostly the mathematical structure of such target measures. Further details regarding the simulator and some basic background material about the reservoir model are deferred to Appendix B. In  Section \ref{sec4} we present numerical experiments. For completeness, we include Appendix A, containing some simple theoretical results regarding the modified algorithms.

\section{Description of the algorithms}\label{sec2}
In this section we present  the three main algorithms that we would like to compare, namely the Hamiltonian Monte Carlo (HMC) algorithm, the SOL-HMC algorithm and the Horowitz algorithm.  With abuse of notation, throughout we will denote a probability measure and its density with the same letter, i.e. $\pi(dx)= \pi(x)dx$. 

 Suppose we wish to sample from a probability measure $\pi$ defined on $\R^N$ which has a density of the form
\be
\pi(x) \propto e^{-V(x)}e^{-\langle x, C^{-1}x\rangle}, 
\ee
i.e. the target density $\pi$ is absolutely continuous with respect to a Gaussian measure with covariance matrix $C$ 
(as customary, we assume that such a matrix is symmetric and positive definite). 
All three of our algorithms make use of the common trick of introducing an auxiliary variable $p\in \R^N$ and sampling from the density $\tilde{\pi}$ defined on the  extended state space $\R^N \times \R^N$ as follows
\be\label{auxtarget}
\tilde{\pi}(x,p) \propto e^{-V(x)}e^{-\langle x, C^{-1}x\rangle}e^{-\frac{1}{2}p^2}.
\ee
The original target measure  $\pi$ is therefore the marginal of $\tilde{\pi}$ with respect to the variable $x$. More precisely,  the algorithms we will consider  generate a chains $\{(x^k, p^k)\}_k \subset \R^N\times \R^N$ which sample from the measure  \eqref{auxtarget}; because \eqref{auxtarget} is a product measure of our desired target times a standard Gaussian, if we consider just the first component of the chain $\{(x^k, p^k)\}_k$, i.e. the chain $\{x^k\}_k$, then, for $k$ large enough, such a chain will be sampling from the measure $\pi$. We now focus on explaining how the chain $\{(x^k, p^k)\} \subset \R^N\times \R^N$ is generated by the three algorithms we consider. 

Let us  introduce the Hamiltonian function
\be\label{eq:Hamiltonianfunct}
H(x,p) = V(x)+\langle x, C^{-1}x\rangle+\frac{1}{2}p^2 \,;
\ee
then the associated Hamiltonian flow can be written as 
\begin{equation}\label{eq:Hamiltonianflow}
    \begin{cases}
    \dot{x} = p\\
    \dot{p} = -x- C\nabla V(x).
    \end{cases}
\end{equation}

Let $\chi^t$ denote a numerical integrator for the system \eqref{eq:Hamiltonianflow} up to time $t$ (we will comment below on our choices of integrator). The {\bf HMC algorithm} then proceeds as follows:  suppose that at time $k$ the first component of the chain is $x^k$ and 
\begin{enumerate}
    \item pick $p^k \sim N(0,I)$;
    \item compute
    $$
    (\tilde{x}^{k+1},\tilde{p}^{k+1})=\chi^\delta(x^k,p^k)
    $$
    and propose $\tilde{x}^{k+1}$ as the next move;
    \item calculate the  acceptance probability $\alpha_k$, according to
    \be
    \alpha_k = \min(1, e^{-(H(\tilde{x}^{k+1},\tilde{p}^{k+1})-H(x^k,p^k))};
    \ee
    \item Set $q^{k+1}=\tilde{q}^{k+1}$ with probability $\alpha_k$, otherwise set $q^{k+1}=q^k$.
\end{enumerate}

In principle any numerical integrator can be used in an HMC algorithm (see \cite{Sanz, RadfordNeal} for more detailed comments on this). In this paper we will consider two numerical integrators $\chi$, which are two popular splitting schemes.  The first is given by splitting the ``momentum'' and ``position'' equations, see e.g. \cite{Sanz} and references therein. That is,  let  $\Mom^t$ denote the solution map at time $t$ of the system 
\begin{equation}
    \begin{cases}
    \dot{x} = 0,\\
    \dot{p} = -x-C\nabla V(x)
    \end{cases}
\end{equation}
and $\Pos^t$ denote the solution map at time $t$ of the system 
\begin{equation}
    \begin{cases}
    \dot{x} = p\\
    \dot{p} = 0.
    \end{cases}
\end{equation}
For HMC we shall always use the numerical integrator \begin{equation}\label{intHMC}
\chi^{\delta}_H = \Mom^{\delta/2} \Pos^\delta \Mom^{\delta/2} \,.
\end{equation}
Note that we can always write the maps $\Mom^t$  and $\Pos^t$ explicitly;  indeed,
\begin{align}\label{pdelta}
    \Mom^{\delta/2}(x,p) &= \left(x, p-\frac{\delta}{2}x- C\nabla V(x)\right)\\
    \Pos^\delta(x,p) &=\left(x+\delta p,p\right) \,.
\end{align}

The other splitting scheme that we will consider splits  the Hamiltonian system \eqref{eq:Hamiltonianflow} into its linear and nonlinear part. More precisely, let $\Rot^t$ and $\Th^t$ be the flows associated with the following ODEs:
\begin{equation}
    \Rot^t:\begin{cases}
        \dot{x} = p,\\
    \dot{p} = -x,
    \end{cases}
    \Th^t:\begin{cases}
        \dot{x} = 0,\\
    \dot{p} = - C \nabla V(x).
    \end{cases}
\end{equation}
  The resulting integrator is given by 
 \be\label{intSOL}
  \chi^{\delta}_{S} = \Th^{\delta/2} \Rot^{\delta} \Th^{\delta/2}. 
  \ee
This is the integrator that we will use in the SOL-HMC algorithm (see step (1) of the SOL-HMC algorithm below); the use of this splitting scheme for high dimensional problems has been studied in \cite{InfHMC}.  SOL-HMC  is motivated as a time-discretisation of the SDE
\begin{align}\label{SOLSDE}
    \begin{cases}
    dx&=p dt,\\
    dp&=[-x- C\nabla V(x)]dt-pdt+\sqrt{2 C}dW_t,
    \end{cases}
\end{align}
where $\{W_t\}_{t\geq 0}$ is a standard $N$-dimensional Brownian motion. Such an equation can be seen as a Hamiltonian dynamics perturbed by an Ornstein-Uhlenbeck process in the momentum variable. As is well known, the SDE \eqref{SOLSDE} admits the measure \eqref{auxtarget} as unique invariant measure, see e.g.  \cite{Villani}.  With these observations in mind, define $\mathcal{O}^\varepsilon$ to be the map which gives the solution at time $\varepsilon$ of the system
\begin{align}\label{oe1}
\begin{cases}
    dx&=0,\\
    dp&=-pdt+\sqrt{2}dW_t.
\end{cases}
\end{align}
Note that we may solve this system explicitly, indeed
\begin{equation} \label{oe2}
\mathcal{O}^\varepsilon(x,p) = (x, pe^{-\varepsilon}+(1-e^{-2\varepsilon})^{\frac{1}{2}}\xi)
\end{equation}
where $\xi$ is a standard normal random variable. In Section \ref{sec4} we will set  
\begin{equation}\label{parami}
    e^{-2\varepsilon}=1-i^2, 
\end{equation}    
where $i$ is a parameter we can tune;  in which case we have
\be
\mathcal{O}^\varepsilon(x,p) = (x, pe^{-\varepsilon}+i\xi).
\ee
The {\bf SOL-HMC algorithm} is as follows:
\begin{enumerate}
    \item Given $(x^k,p^k)$, let
    \be
    (\hat{x}^k,\hat{p}^k) = \mathcal{O}^\varepsilon(x^k,p^k)
    \ee
    and propose 
    \be
    (\tilde{x}^{k+1},\tilde{p}^{k+1}) = \chi^\delta_S(\hat{x}^k,\hat{p}^k), 
    \ee
    where we recall that $\chi^{\delta}_S$ is the integrator introduced in \eqref{intSOL}; 
    \item calculate the acceptance probability $\alpha_k$ according to
    \be\label{eq:accrejprob}
    \alpha_k = \min(1, e^{-(H(\tilde{x}^{k+1},\tilde{p}^{k+1})-H(\hat{x}^k,\hat{p}^k))});
    \ee
    \item set
    \be
    (x^{k+1},p^{k+1}) = \begin{cases}
    (\tilde{x}^{k+1},\tilde{p}^{k+1}) & \text{ with probability } \alpha_k,\\
    (\hat{x}^k,-\hat{p}^k) & \text{ with probability } 1-\alpha_k.
    \end{cases}
    \ee
\end{enumerate}
Finally, the algorithm that we will refer to as the {\bf Horowitz algorithm}, is just the SOL-HMC algorithm when in step one, instead of using the integrator $\chi_S$, we use the integrator $\chi_H$ (defined in \eqref{intHMC}). 

\begin{remark}\label{rem31}\textup{
We do not give many details about HMC, SOL-HMC and the Horowitz algorithm here, and refer to the already cited literature. However we would like to stress the two following facts: 
\begin{itemize}
\item The chain $\{x^k\}_k$ produced by the HMC algorithm is reversible with respect to the measure $\pi$, in the sense that it satisfies detailed balance with respect to $\pi$ \cite{RadfordNeal} -- more precisely, the chain $\{(x^k,p^k)\}_k$ generated by HMC satisfies a generalised detailed balance condition with respect to $\tilde{\pi}$, see e.g. \cite[Lemma 1]{Sanz} or \cite{Lelievre}.  In contrast the chains generated by the Horowitz algorithm and the SOL-HMC do not satisfy any form of detailed balance with respect to $\tilde{\pi}$ and they are therefore non-reversible, see \cite{Ottirr, {OPSP}}.  In Appendix A we will show that adding reflections to the algorithm does not alter this property. 
\item The difference between the Horowitz algorithm and HMC may seem small, but in reality this small difference is crucial. Indeed, thanks to this choice of integrator, SOL-HMC is well-posed in infinite dimensions, while the Horowitz algorithm is not. For a discussion around this matter see \cite{Ottirr, InfHMC}. 
\end{itemize}
}
\end{remark}

As mentioned in the introduction, in this paper we will be interested in sampling from measures which are not necessarily supported on the whole space $\R^N$, but just on     some box $B=[-a,a]^N$.  If this is the case then one may still use any one of the above algorithms and reject  proposal moves that fall outside the box. We will briefly numerically analyse this possibility (see Section \ref{sec4}). Alternatively, one may want to simply make sure that all the proposed moves belong to the box $B$, so that the algorithm doesn't waste too much time rejecting the moves that fall outside the box.  We therefore consider modified versions of the introduced algorithms by  introducing appropriate reflections to ensure that all of the proposals belong to the box $B$. Because the proposal is defined through numerical integration of the Hamiltonian dynamics, we will need to modify the integrators $\chi_H$ and $\chi_S$. 

First consider the map $\Pos^\delta$ defined in \eqref{pdelta}; then we define map $\Posref^\delta$ recursively as follows:
\begin{enumerate}
    \item If $\Pos^{\delta}(x,p)\in B$ then set $\Posref^\delta(x,p)=\Pos^\delta(x,p)$.
    \item Otherwise define 
    \begin{equation}
        \alpha = \inf\{\beta\in[0,1]: \Pos^{\beta \delta}(x,p)\notin B\}.
    \end{equation}
    In which case $\Pos^{\alpha \delta}(x,p)$ lies on the boundary of the box, so there exists some\footnote{It could occur that there is more than one $j$ such that the $j$th component of $\Pos^{\alpha\delta}(x,p)$ is $\pm a$, in which case apply the operator $S_j$ for all such $j$.} $j\in\{1,\ldots, N\}$ such that the $j$th component of $\Pos^{\alpha\delta}(x,p)$ is either $a$ or $-a$. Then we define 
    \begin{equation}
        \Posref^\delta(x,p) = \Posref^{(1-\alpha)\delta}(S_j(\Pos^{\alpha \delta}(x,p)).
    \end{equation}
    Here $S_j$ is the reflection map $S_j(x,p)=(x,p_1,\ldots,p_{j-1},-p_j,p_{j+1},\ldots,p_N)$.
\end{enumerate}

Similarly we define $\Rotref^\delta$ by
\begin{enumerate}
    \item If $\Rot^{\delta}(x,p)\in B$ then set $\Rotref^\delta(x,p)=\Rot^\delta(x,p)$.
    \item Otherwise define 
    \begin{equation}
        \alpha = \inf\{\beta\in[0,1]: \Rot^{\beta \delta}(x,p)\notin B\}.
    \end{equation}
    In which case $\Rot^{\alpha \delta}(x,p)$ lies on the boundary of the box, so there exists some $j\in\{1,\ldots, N\}$ such that the $j$th component of $\Rot^{\alpha\delta}(x,p)$ is either $a$ or $-a$. Then we define 
    \begin{equation}
        \Rotref^\delta(x,p) = \Rotref^{(1-\alpha)\delta}(S_j(\Rot^{\alpha \delta}(x,p))).
    \end{equation}
\end{enumerate}
Note that it may occur that $\Rot^{\delta}(x,p)\in B$ however there is some point $\alpha\in[0,1]$ such that $\Rot^{\alpha\delta}(x,p)\notin B$, in this case we still set $\Rotref^{\delta}(x,p)=\Rot^{\delta}(x,p)$. Therefore the algorithm {\textit {\bf HMC-bounce}} ({\textit {\bf Horowitz-bounce}}, respectively) is defined like   HMC (Horowitz, respectively), but the numerical  $\chi^{\delta}_{H, Bounce} = \Mom^{\delta/2} \Posref^\delta \Mom^{\delta/2}$ is employed in place of the integrator $\chi_H$; analogously,  the algorithm {\em {\bf SOL-HMC-bounce}} is defined as the algorithm SOL-HMC with numerical integrator $\chi^{\delta}_{S,Bounce} = \Th^{\delta/2} \Rotref^{\delta} \Th^{\delta/2}$ in place of $\chi_S$. 

%

\section{Target measures}\label{sec3}
In this section we describe the three target measures that will be the object of our simulations. The first measure we consider, $\pi_{Ros}$,  is a change of measure from the popular 5D Rosenbrock, see \eqref{5DRos}.  This is the most artificial example we consider. The other two target measures are posterior measures for parameter values in reservoir simulation models. Roughly speaking, the second target measure we describe, $\pi_{full}$,  is a posterior measure on the full set of parameters of interest in a reservoir model; for our reservoir model, which is quite realistic, we will be sampling from 338 parameters, hence this measure will be a measure supported on $\R^{338}$. The third measure, $\pi_{light}$,  is a measure on $\R^{21}$, which derives from considering a considerably simplified reservoir model. We will refer to the former as {\em full reservoir model} and to the latter as {\em lightweight parametrization}. In this section we explain the mathematical structure of $\pi_{full}$ and $\pi_{light}$, without giving many details regarding the inverse problem related to the reservoir model. More details about the reservoir model and the simulator used to produce the likelihood function have been included in Appendix B for completeness. 
In the following we let $I_N$ denote the $N\times N$ identity matrix.
Let us now  come to describe our targets. 

\smallskip
$\bullet$ {\bf Change of measure from 5D Rosenbrock (i.e. $\pi_{Ros}$)}. The first target measure we consider is a measure on $\R^5$ and it is a change of measure from  the 5D Rosenbrock measure; namely, the density of 5D Rosenbrock is given by
\begin{align}
f(x)=\sum_{i=1}^4 100(x_{i+1}-x_i^2)^2 + (1-x_i)^2, \quad x=(x_1, {\dots}, x_5). 
\label{eq:ilya_1}
\end{align}
The target we consider is given by 
\begin{equation}\label{5DRos}
\pi_{Ros}(x) \propto e^{-\frac{1}{2} f(x)}e^{-\frac{1}{2} \langle x, C^{-1}x\rangle },
\end{equation}
where \(C\) is the prior covariance matrix. In all the numerical experiments (see Section \ref{sec4}) regarding $\pi_{Ros}$ we take $C=0.3\cdot I_{5}$.

$\bullet$ {\bf Full reservoir simulation}. We study a Bayesian inverse problem for reservoir simulation. 
We consider a synthetic reservoir with 7 layers, and 40 producing/injecting wells. A diagrammatic representation of the reservoir is shown in Figure~\ref{fig:ilya_1}.  Each layer is divided in blocks and, while the well goes through all the layers, it will not necessarily communicate through perforations with all the blocks it goes through (in Figure~\ref{fig:ilya_1} we highlight in yellow the boxes containing a perforation of a well). We also assumed that, in each layer, the well goes through at most one block. 
In total our subsurface reservoir is made of 124 blocks: 38  blocks on the boundaries to represent the active aquifers, one block per layer per well,  plus some additional blocks (which are neither acquifer blocks nor crossed by the wells).  

\begin{figure}[h]
\noindent
\includegraphics[keepaspectratio=true, width=\textwidth]{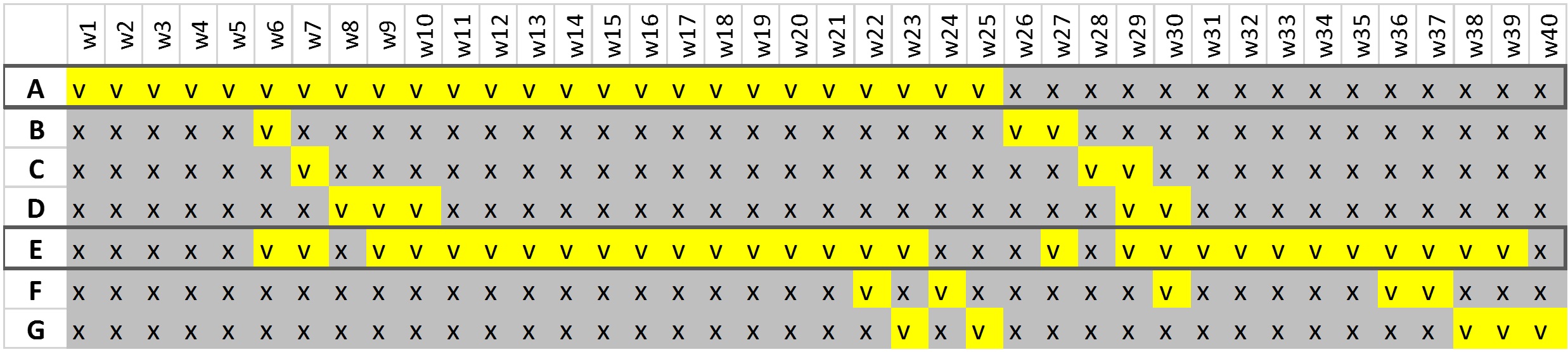}
\caption{Perforations of the 40 wells (\textit{columns}) in the seven layers (\textit{rows}). Yellow ``v" stands for the block containing a perforation of a well. That is, the well goes through all the layers, but there is a hole for well-block communication only in correspondence of the yellow boxes.  This figure does not show all the blocks -- but only those perforated by the wells. In particular, it does not show the aquifer blocks located on the boundary.}
\label{fig:ilya_1}
\end{figure}
The reservoir properties (i.e. the parameters that we will be interested in sampling from) are described by the \textit{pore volumes} \(V_\ell\) of the blocks, $\ell\in\{1, \dots, 124\}$,  the \textit{transmissibilities} \(T_{\ell j}\) between the interconnected blocks $\ell$ and $j$, and the \textit{perforation productivity coefficients} \(J_{w\ell}\) for the well-block connections. We do not explain here the practical significance of such parameters and for more details on reservoir simulation we refer the reader to \cite{Manual}.  
Altogether the parameter space for this example is 338-dimensional. 
For the sake of clarity all (nonzero) $T_{\ell j}$ are re-indexed with a single index as  $T_p$, $p \in \{1, \dots, 139\}$; and similarly $J_{w \ell}$ are re-indexed as $J_k$,
$k \in \{1\dots 75\}$ 
and we denote by $x \in \R^{338}$ the full vector  of parameters, i.e. $x=(V_1,\dots, V_{124}, T_1,  \dots,T_{139}, J_1, \dots J_{75})^T$. There are 86 non-aquifer blocks in total, and we always assume an ordering of the parameters $V_{\ell}$ such that the first 86 of them correspond to the non-aquifer blocks. 
In our   Bayesian inverse problem for the parameters $x$  the likelihood function is defined from the reservoir simulation output, and the prior is a Gaussian with covariance matrix $C$. The observed block pressure and the well bottom hole pressure (BHP) data are known for certain wells and time steps; we  arrange such data into the vector \(d_0\).  The likelihood \(L(d_0|x)\), see equation \eqref{eq:ilya_2} below, is defined using the simulator-modelled data \(d(x)\), the observed data \(d_0\), and the covariance matrix for data errors \(C_d\). The function $d(x)$ is found by numerical solution of a system of ordinary differential equations, which we report in Appendix B, see \eqref{eq:ilya_sim1} -- \eqref{eq:ilya_sim4};  such a system describes the relation between the vector of reservoir properties $x$ and the simulated pressures. The important thing for the time being is that such a system is high dimensional and the resulting posterior is analytically intractable.\footnote{The simulator we use also allows for fast calculation of the gradients of the log likelihood by the adjoint procedure \cite{Adjoint}, so that HMC-type samplers can be run. 
}
Finally, we seek to sample from the measure 
$$
\pi_{full} (x \vert d_0) \propto L(d_0 \vert x) \cdot e^{-\langle x, C^{-1}x \rangle},  
$$
where the likelihood function is of the form 
\begin{align}
L(d_0|x)=\exp\left(-\frac{1}{2}(d(x)-d_0)^T C_d^{-1} (d(x)-d_0)\right) \,.
\label{eq:ilya_2}
\end{align}
In our numerical experiments we will always take the matrix $C_d$ to be diagonal,  with the entries equal to either \(\sigma_{BHP}^2 = 20^2\) or \(\sigma_{b}^2 =9\).
We will give more details about this choice  in Appendix B.  
The full parameterisation is further divided into three subcases denoted here as \textit{full-a}, \textit{full-b}, \textit{full-c}, which have different min/max bounds for the parameters of interest or prior covariances.
For the \textit{full-a} case we define the minimum \(L^i\) and maximum \(U^i\) bounds of each parameter 
\(x_i\in \{V_{\ell}, T_p, J_k\}\) as follows: let  $\xx_i$ be the maximum likelihood  value of the parameter $x_i$,  found approximately, by running an optimization algorithm on all the parameters;\footnote{The optimization algorithm used here is BFGS \cite{BFGS}, but in principle any other could be used.} we then take \(L^i = 0.1 \xx_i\), \(U^i = 10 \xx_i\), \(i=1,...,338\). 
Since the values of physical parameters \(x_i\) may differ by several orders of magnitude, it makes sense to apply a transform to get a similar magnitude for all the parameters.
Such a transform was done by function \(\log_{10}\) and a constant shift, mapping the parameters \(x_i\) from the original range \( [L^i, U^i]\)  to \([-1,1]\).
 The prior covariance is taken as  \(C_{\text{full-a}} = 0.25\cdot I_{338}\). 
 So, all the parameters in the transformed representation vary within the box \([-1, 1]\) and have standard deviation \(0.5\). For the \textit{full-b} case wider parameter bounds are taken: \(L^i = 0.001 \xx_i\), \(U^i = 1000 \xx_i\), \(i=1,...,338\). The parameters are transformed by \(\log_{10}\) function, and then mapped to the interval \([-3,3]\). The prior covariance is the same as in the \textit{full-a} case, so all the parameters have standard deviation 0.5 in the transformed representation. Case \textit{full-c} uses the same parameter bounds and the same transform as case \textit{full-b}, but a wider prior covariance \(C_{\text{full-c}} = 9\cdot C_{\text{full-a}}\), which means the prior standard deviation is 1.5 in the transformed representation.

\smallskip
$\bullet$ {\bf Lightweight parameterisation} Here we consider a reduced,  21-dimensional, parameter  space. 
Here we just  fix the values of $V_1,\dots,V_{86}$ (non-aquifer blocks), and we find the remaining $V_{87},\dots, V_{124}$ (aquifer blocks), $T_1,  \dots,T_{139}$ (all blocks), $J_1, \dots J_{75}$ (all perforations), which are required by the simulator, using  21 new parameters. Such parameters essentially act as multipliers;  namely,  for each one of the seven layers $n \in \{\text{A},\dots \text{G}\}$ we introduce one {\em pore volume multiplier for the aquifer blocks} \(\tilde{V}_n\), one {\em transmissibility multiplier} \(\tilde{T}_n\), and one perforation productivity multiplier \(\tilde{J}_n\). These parameters, collectively denoted by $y \in\R^{21}$, are those that we are interested in sampling from, by using the posterior measure 
$$
\pi_{light}(y) \propto 
{L}(d_0|X(y)) \cdot \rho(y),
$$
where $\rho(y)$ is a zero mean Gaussian with covariance matrix denoted by $C_{21}$, described below. Because we are using the same simulator as for the full reservoir simulation, the likelihood function $L$ is still the one defined in  \eqref{eq:ilya_2}, hence necessarily we must have $X(y)\in \R^{338}$. To define the function $X:\R^{21}\rightarrow \R^{338}$, we need to introduce some notation first.  Denote by $A_n$  the number of aquifer blocks in layer $n$, $P_n$ the number of transmissibility coefficients in layer $n$, and $K_n$ the number of well perforations in this layer. Let $\VV_{\ell}$ be the maximum likelihood value of the parameter $V_{\ell}$ (similarly for $\TT_{p}$ and $\JJ_{k}$), again found by running a maximum likelihood algorithm, and let the corresponding full vector denoted by $\xx$.
The first 86 components of $X(y)$ (corresponding to non-aquifer $V_{\ell}$) are taken equal to $\VV_{\ell}$, $\ell=1,\dots 86$, irrespective of the input $y$. The remaining $338-86=252$ components of $X(y)$ are found by a linear mapping $M\cdot y$, using a $252 \times 21$ sparse matrix $M$.
The first column of $M$ contains the vector 
  $$
  (\VV_{86+1}, \dots, \VV_{86+A_1}, 0 \dots\dots\dots\dots \dots \dots \dots\dots\dots 0)^T,
  $$
  the second column contains the vector
  \begin{align*}
   &(\underbrace{0, \dots\dots \dots \dots ,0},\VV_{86+L_1+1}, \dots, \VV_{86+A_1+A_2}, 0 \dots 0 )^T,   \\
   & \quad \quad \quad \quad L_1
  \end{align*}
 and so forth until the 7th column. The columns from 8 to 14 are built similarly, 
 such that column $n+7$ corresponds to layer $n$ and has $P_n$ non-zero values equal to $\TT_p$ (for appropriate indices $p$).
 The last seven columns are built in the same way, this time using the values $\JJ_{k}$.


For the \textit{lightweight} parameterisation the following minimum and maximum bounds were employed: 
\([0.15, 15]\) for all $\tilde{V}_n$, 
\([0.07, 7]\) for all $\tilde{T}_n$, and
\([0.11, 11]\) for all $\tilde{J}_n$. 
As before, the physical parameters (multipliers) $y_i$ are  mapped to the interval \([-1,1]\).
The prior covariance \(C_{21}\), which acts in the transformed variables, was taken as essentially a diagonal matrix with the main diagonal equal to \(0.25\), however additional non-zero covariances equal to \(0.1\) were also specified between the transmissibility multiplier \(\tilde{T}_n\) and perforation productivity multiplier \(\tilde{J}_n\) for each layer \(n\). A brief summary of the four discussed cases of the model parameterisation is presented in Table \ref{tab:ilya_table1}.

\begin{table}[h!]
  \setlength\extrarowheight{6pt}
  \begin{center}
    \begin{tabular}{|c|c|c|c|c|c|c|c|}
        
        \hline
        \multirow{2}{*}{Case} & \multirow{2}{*}{dim} & 
        \multirow{2}{*}{\parbox{2cm}{\centering parameters notation}} & \multicolumn{2}{c|}{\textit{for phys. par.}} & \multicolumn{3}{c|}{\textit{for transformed parameters}} \\[3pt]
        
        \cline{4-8}
        
        \multirow{2}{*}{} & \multirow{2}{*}{} & \multirow{2}{*}{} & \(l_i\) & \(u_i\) & params range & prior cov & prior std \\[3pt]
        \hline
        
        \textit{lightweight} & 21 & \(\tilde{V}_n, \tilde{T}_n, \tilde{J}_n\), or $y_i$ & 0.1 & 10 & \([-1,1]\) & \(C_{21}\approx \text{diag}\) & \(\approx 0.5\) \\[3pt]
        \hline

        \textit{full-a} & 338 & \(V_{\ell}, T_p, J_k\), or $x_i$ & 0.1 & 10 & \([-1,1]\) & \(C_\text{full-a}=\text{diag}\) & \(0.5\) \\[3pt]
        \hline
        
        \textit{full-b} & 338 & \(V_{\ell}, T_p, J_k\), or $x_i$ & 0.001 & 1000 & \([-3,3]\) & \(C_\text{full-b}=C_\text{full-a}\) & \(0.5\) \\[3pt]
        \hline

        \textit{full-c} & 338 & \(V_{\ell}, T_p, J_k\), or $x_i$ & 0.001 & 1000 & \([-3,3]\) & \(C_\text{full-c}=9 C_\text{full-a}\) & \(1.5\) \\[3pt]
        \hline

    \end{tabular}

    \vspace{3pt}

    \caption{Summary of the subcases for the reservoir simulation model. In physical representation the lower bounds are \(L^i=l_i b_i\), the upper bounds are \(U^i=u_i b_i\), where \(l_i, u_i\) are reported in the table, and \(b_i\) are some base case parameter values (e.g. for all \textit{full} parameterisations $b_i=\xx_i$).}
    \label{tab:ilya_table1}

  \end{center}
\end{table}

\section{Numerics: sampling from measures supported on bounded domains}\label{sec4}

To compare efficiency of the algorithms we compute a normalised effective sample size (nESS), where the normalisation is by the number of samples $N$. 
Following \cite{MA}, we define the Effective Sample Size $ESS = N/\tau_{int}$ where $N$ is the number of steps of the chain  (after appropriate burn-in) and $\tau_{int}$ is the integrated autocorrelation time, $\tau_{int}:= 1+ \sum_k \gamma (k)$, where $\gamma(k)$ is the $lag-k$ autocorrelation. Consistently, the normalised ESS, nESS, is just $nESS:= ESS/N$.  Notice that $nESS$ can be bigger than one (when the samples are negatively correlated), and this is something that will appear in our simulations. 
As an estimator for $\tau_{int}$  we will take the Bartlett window estimator (see for example \cite[Section 6]{MA}, and references therein)   
rather than the initial monotone sequence estimator (see again\cite[Section 6]{MA}), as the former is more suited to include non-reversible chains. 
Since the nESS is itself a random quantity, we performed 10 runs of each case using different seeds, and our plots below show P10, P50, P90 percentiles of the nESS from these runs. 


\subsection{Bounces vs Rejection}
First we consider the performance of the two proposed methods for sampling from the box $B$. We illustrate these by comparing 
SOL-HMC-bounce and SOL-HMC-rej.

In Figure \ref{fig:ilya_f1} we compare performance of
SOL-HMC-bounce and SOL-HMC-rej for sampling from the 5D Rosenbrock target $\pi_{Ros}$; each one of the five parameters is taken to vary in the interval  \([-a,a]\), and  Figure \ref{fig:ilya_f1} shows how the performance varies as the size $a$ of the box varies,  \(a =0.1,0.2,\ldots,1.4\).  The target acceptance rate for both samplers was set to 0.9, and parameter \(i=0.6\) (defined in \eqref{parami}).
\begin{figure}[h]
\noindent
\centering
\begin{subfigure}{.5\textwidth}
  \centering
  \includegraphics[width=\linewidth]{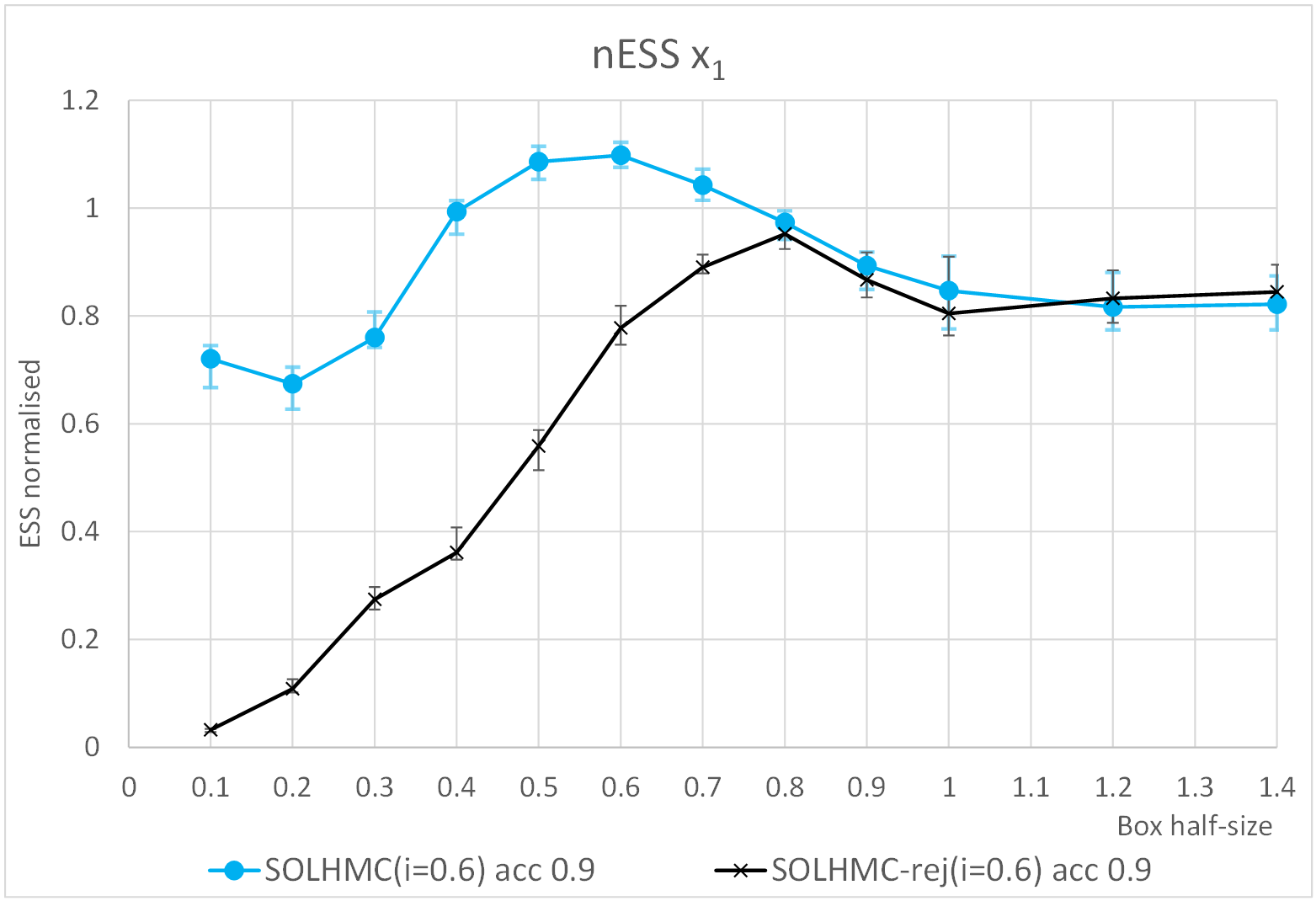}
\end{subfigure}%
\begin{subfigure}{.5\textwidth}
  \centering
  \includegraphics[width=\linewidth]{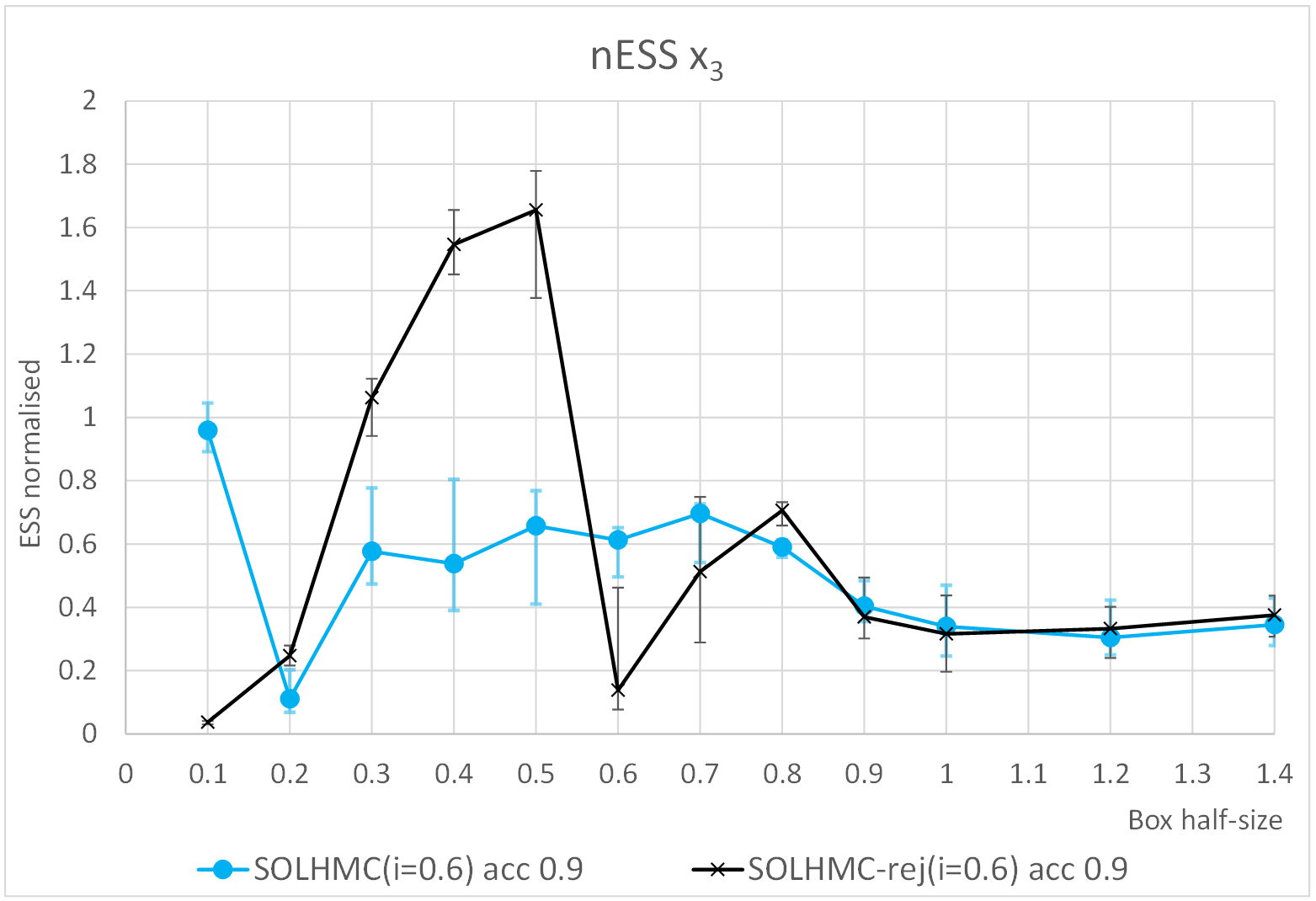}
\end{subfigure}
\caption{
Normalised ESS for SOL-HMC-bounce (\textit{blue}) and SOL-HMC-rej (\textit{black}), for different sizes of the box bounding the parameter space (\textit{X-axis}). The two plots correspond to coordinates \(x_1, x_3\) only. The other three coordinates have nESS plots similar to \(x_3\).
}
\label{fig:ilya_f1}
\end{figure}
%

As a ``sanity test" the plots indicate that for the larger boxes (\(a\geq 0.8\)) the 
two implementations SOL-HMC-bounce and SOL-HMC-rej are almost identical 
 (in terms of nESS), which is natural as for large box sizes these two algorithms coincide. 
 For small box sizes, the performance of the two samplers depends really on which coordinate is being sampled, so 
the performance of the two algorithms is substantially indistinguishable for this low dimensional problem.

It is important to note the following practical drawback of SOL-HMC-rej (or indeed any other sampler which handles boundaries by the rejection mechanism) with respect to SOL-HMC-bounce: during the proposal step a trajectory may leave the box, and then return back inside the box. By construction of the algorithm such a trajectory is not rejected just because it escaped from the domain for a short while. The accept/reject decision is made only for the final point of the proposal trajectory, and thus every trajectory needs to be calculated till the end.
However, if the trajectory is allowed to leave the box for the intermediate calculations, it may go to the extreme regions of the parameter space, where the simulator may suffer from severe numerical errors and abnormal behaviour. 
We illustrate this phenomena by comparing SOL-HMC-rej against HMC-bounce in Figure \ref{fig:ilya_f2} for \textit{full-a} in (A) and \textit{full-b} in (B). (Here we think of  HMC-bounce as sort of gold standard and for this reason we  compare SOL-HMC-rej with HMC-bounce). 
We examine the ratio of nESS of SOL-HMC-rej and HMC-bounce and plot a histogram for the parameters. When the nESS ratio is bigger than one then SOL-HMC-rej is performing better than HMC-bounce. This is the case in (B) for \textit{full-b}. However in (A) for \textit{full-a} the boundary of $B$ is encountered far more frequently, just because the size of the box for this target measure is smaller, see Table \ref{tab:ilya_table1}. Moreover a comparison of  the histograms in Figure \ref{fig:ilya_f2} (A) with the one in Figure \ref{fig:ilya_f7D} (A) shows better performance of SOL-HMC bounce with respect to SOL-HMC Rejections.
From now on we consider SOL-HMC-bounce only. 

\begin{figure}[h]
\noindent
\centering

\begin{subfigure}{.45\textwidth}
  \centering
  \includegraphics[width=\linewidth]{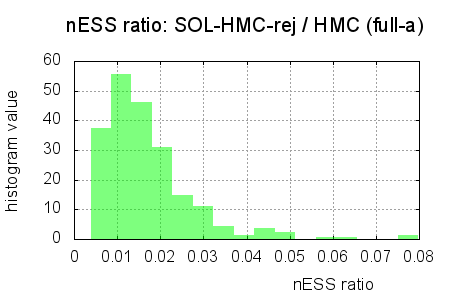}
  \vspace*{-12mm}
  \caption{}
\end{subfigure}%
\begin{subfigure}{.45\textwidth}
  \centering
  \includegraphics[width=\linewidth]{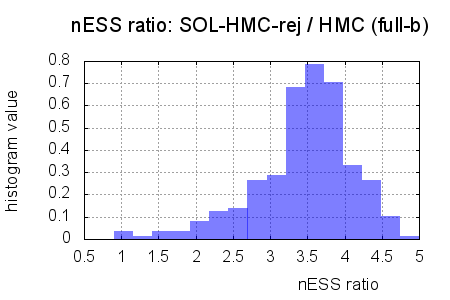}
  \vspace*{-12mm}
  \caption{}
\end{subfigure}
\caption{
Ratio of nESS (SOL-HMC-rej divided by HMC-bounce). Parameterisation \textit{full-a} (A) is shown in green and \textit{full-b} (B) in blue. 
}
\label{fig:ilya_f2}
\end{figure}

\subsection{Comparison for 5D Rosenbrock}
We consider the 5D Rosenbrock target $\pi_{Ros}$ where the minimum-to-maximum range for each one of the five parameters was taken as \([-a,a]\), where \(a =0.1,0.2,\ldots,1.4\).
The plots in Figure \ref{fig:ilya_f4} compare the performance of the HMC-bounce, SOL-HMC-bounce, and Horowitz-bounce algorithms.
The target acceptance rate is 0.9, and the parameter \(i=0.6\) for SOL-HMC-bounce and Horowitz-bounce. For this small dimensional problem  we observe that SOL-HMC-bounce and Horowitz-bounce have similar nESS across the range of sizes for the box $B$. For smaller boxes $B$ (e.g. $a\leq 0.5$) all three algorithms have similar nESS. For larger box sizes we see for parameter $x_1$ an advantage in using SOL-HMC-bounce/Horowitz-bounce
over the HMC-bounce however for $x_2$ there is a slight advantage to HMC-bounce. This corroborates the idea that in low dimension the advantage of introducing irreversibility in the sampler is hardly noticeable. 



\begin{figure}[h]
\noindent
\centering
\begin{subfigure}{.5\textwidth}
  \centering
  \includegraphics[width=\linewidth]{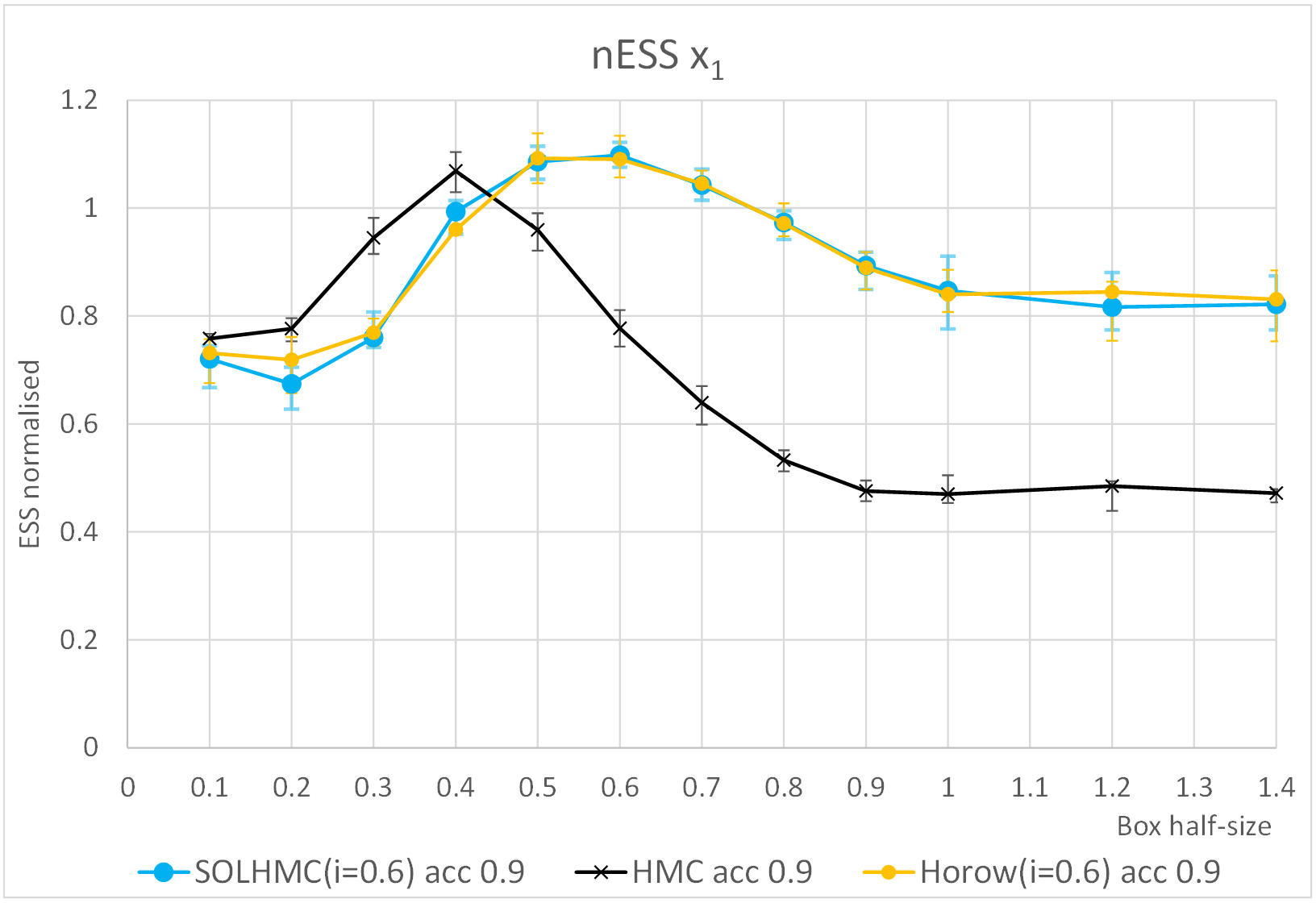}
\end{subfigure}%
\begin{subfigure}{.5\textwidth}
  \centering
  \includegraphics[width=\linewidth]{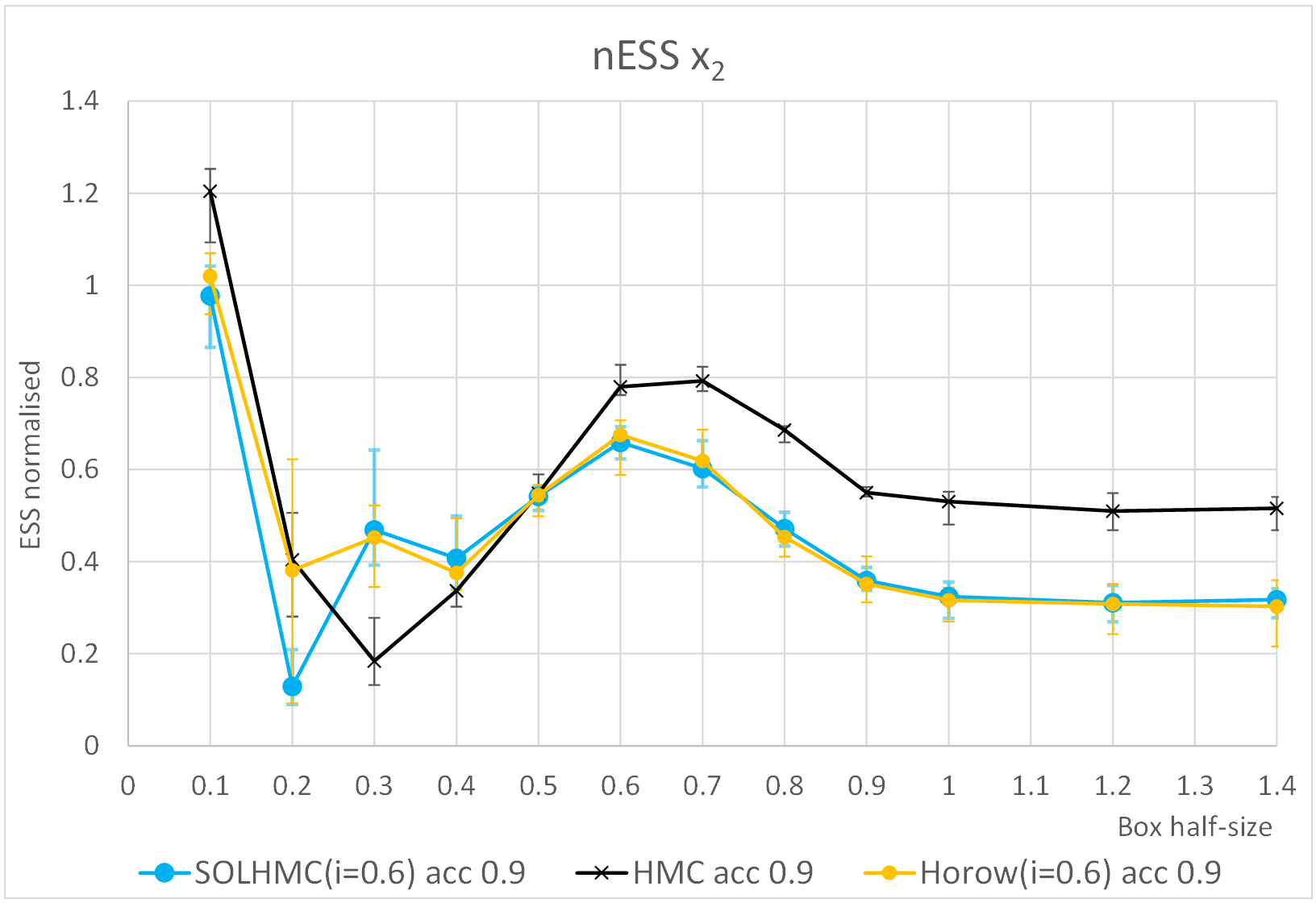}
\end{subfigure}
\caption{
Normalised ESS for SOL-HMC-bounce (\textit{blue}), HMC (\textit{black}), and Horowitz (\textit{orange}), for different sizes of the box bounding the parameter space (\textit{X-axis}). The two plots correspond to coordinates \(x_1, x_2\) only. The other three coordinates show a similar picture.
}
\label{fig:ilya_f4}
\end{figure}

\subsection{Increasing parameter space and effectiveness of non-reversibility.}

%

We now consider our more realistic targets and increase the parameter space to $21$ and then to $338$. We now clearly see the advantage of the non-reversible algorithms SOL-HMC-bounce and Horowitz-bounce over HMC-bounce.

Figure \ref{fig:ilya_f5} reports the nESS for the lightweight parameterisation of the reservoir simulation problem for the following four cases: HMC-bounce with an acceptance rate of 0.82 (target 0.8), SOL-HMC-bounce with acceptance rate of 0.77 (target 0.9), 
SOL-HMC-bounce with an acceptance rate of 0.69 (target 0.8)
and Horowitz-bounce with an acceptance rate of 0.68 (target 0.8).
All SOL-HMC-bounce and Horowitz-bounce algorithms took the parameter \(i=0.5\) and here we give results from a single MCMC run in each case.
The plot clearly shows that the non-reversible algorithms outperform HMC for the majority of the parameters. We also observe the variability due to acceptance rate: for SOL-HMC-bounce a better nESS is achieved for the higher acceptance rate.

\begin{figure}[h]
\noindent
\includegraphics[keepaspectratio=true, width=0.8\textwidth]{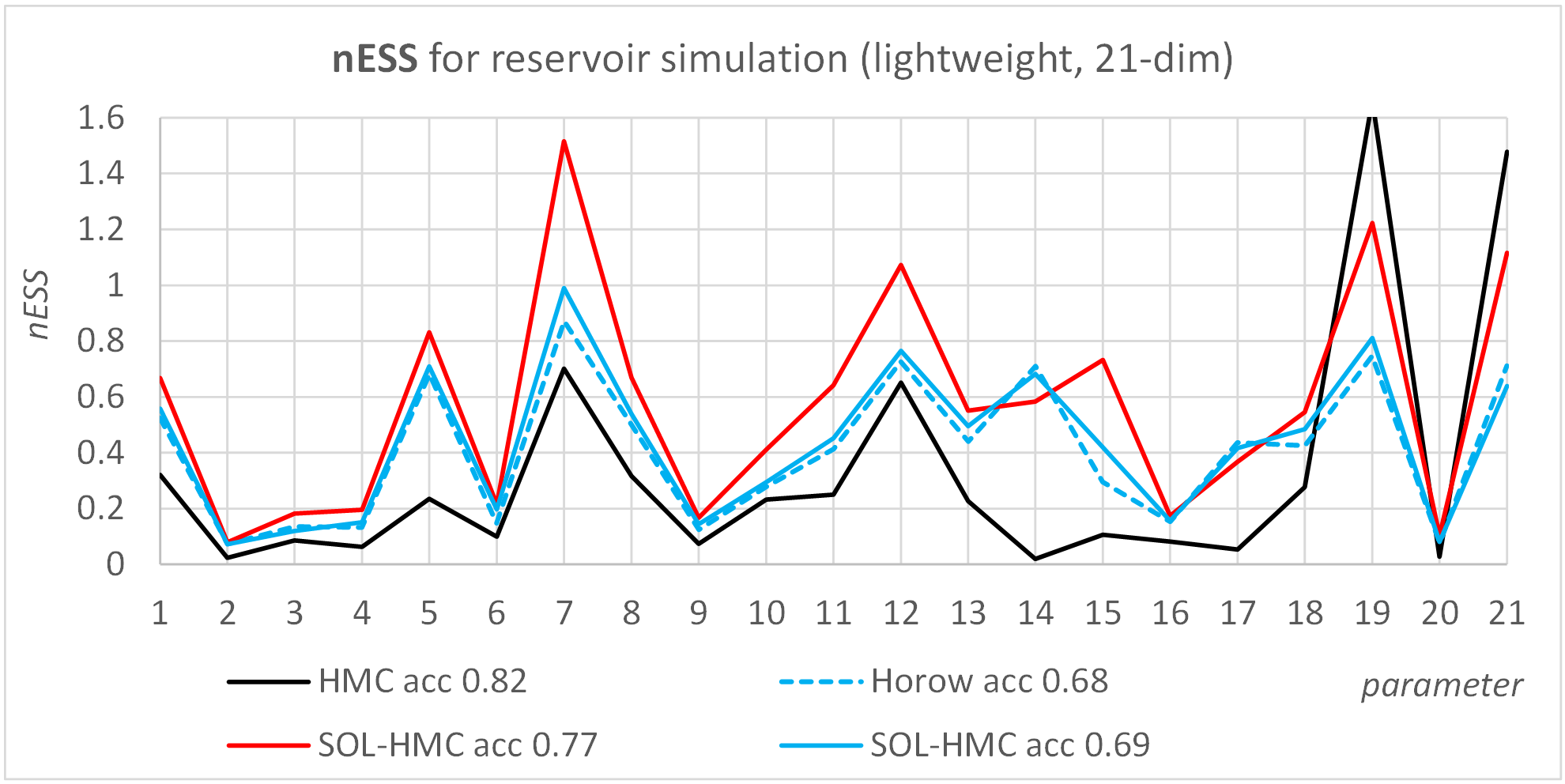}
\caption{Normalised ESS (\(Y\) axis) for the reservoir simulation MCMC, lightweight parameterisation. \(X\) axis shows the 21 parameters. In the legend, the real acceptance rates are indicated.}
\label{fig:ilya_f5}
\end{figure}

As we further increase the dimension and complexity the advantage of the non-reversible algorithm becomes further apparent. In Figure \ref{fig:ilya_f5A} we compare for \textit{full-a} SOL-HMC-bounce and HMC-bounce and observe a clear improved nESS for SOL-HMC-bounce across the whole parameter space.

%

\begin{figure}[h]
\noindent
\includegraphics[keepaspectratio=true, width=0.8\textwidth]{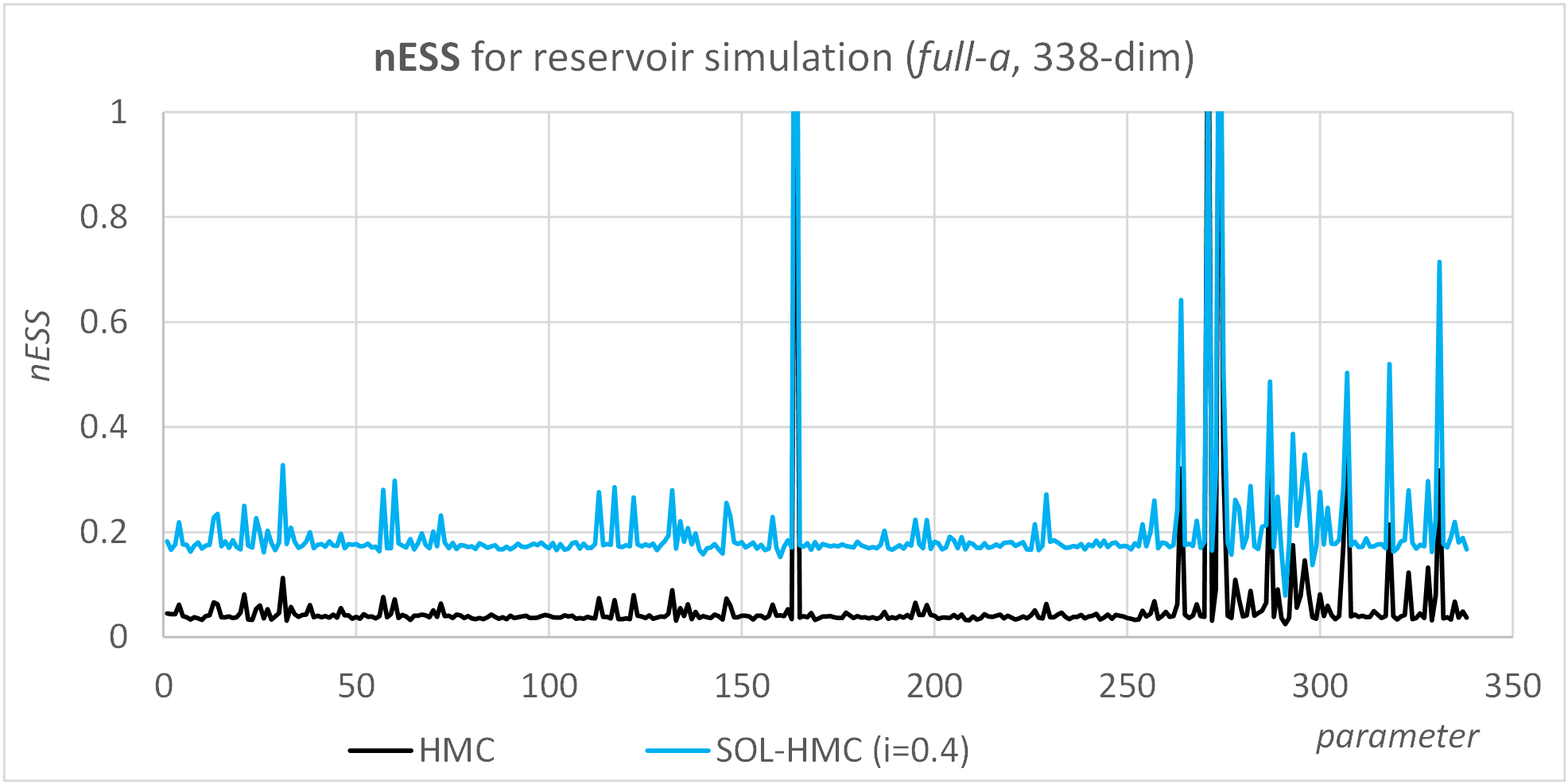}
\caption{Normalised ESS (\(Y\) axis) for the reservoir simulation MCMC, \textit{full-a} parameterisation. \(X\) axis shows the 338 parameters. The samplers are HMC and SOL-HMC with \(i=0.4\).}
\label{fig:ilya_f5A}
\end{figure}


Finally we compare SOL-HMC-bounce and Horowitz against the benchmark of HMC-bounce by examining the ratio of nESS. Recall that when the ratio is bigger than one then SOL-HMC-bounce (or Horowitz) has a larger nESS than HMC.
We consider the targets \textit{full-a}, \textit{full-b} and \textit{full-c}.
In Figure \ref{fig:ilya_f7B} we compare for \textit{full-a} SOL-HMC-bounce against Horowitz-bounce. First note that in both bases the nESS ratio is $>1$ for most parameters showing a clear improvement in the  non-reversible algorithms over HMC. To aid comparison between SOL-HMC-bounce against Horowitz-bounce we plot on (A) and (B) a fit of the histogram from (A), this is the black dotted line.
We see that the nESS for SOL-HMC-bounce over the parameters is larger than that for Horowitz-bounce and that there is an improvement using  
SOL-HMC-bounce. Here we took $i=0.5$. 
\begin{figure}[h!]
\noindent
\centering
\begin{subfigure}{.45\textwidth}
  \centering
  \includegraphics[width=\linewidth]{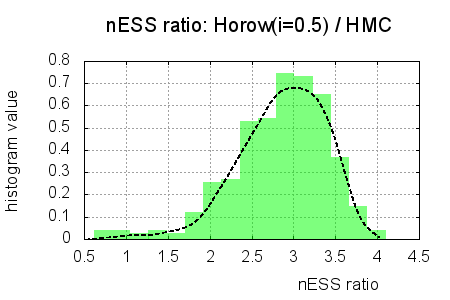}
  \vspace*{-12mm}
  \caption{}
\end{subfigure}%
\begin{subfigure}{.45\textwidth}
  \centering
  \includegraphics[width=\linewidth]{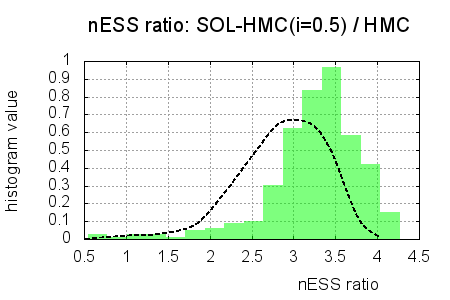}
  
  \vspace*{-12mm}
  \caption{}
\end{subfigure}
\caption{Ratio of nESS for Horowitz-bounce  by nESS for SOL-HMC-bounce. The target measure here is the 338-dimensional \textit{full-a}. Parameter $i=0.5$. 
}
\label{fig:ilya_f7B}
\end{figure}
Figure \ref{fig:ilya_f7C} examines the target \textit{full-b}. For both 
SOL-HMC-bounce against Horowitz-bounce we see an improvement over the reversible HMC algorithm as the ratios are $>1$ for all parameters. We also observe a shift to larger values and hence improvement in the nESS for SOL-HMC-bounce (B) compared to Horowitz-bounce (A). In this figure we took $i=0.7$. This can be compared to Figure \ref{fig:ilya_f7D} (B) where $i=0.4$.
\begin{figure}[h!]
\noindent
\centering
\begin{subfigure}{.45\textwidth}
  \centering
  \includegraphics[width=\linewidth]{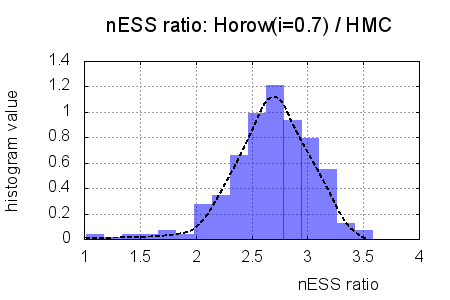}
  \vspace*{-12mm}
  \caption{}
\end{subfigure}%
\begin{subfigure}{.45\textwidth}
  \centering
  \includegraphics[width=\linewidth]{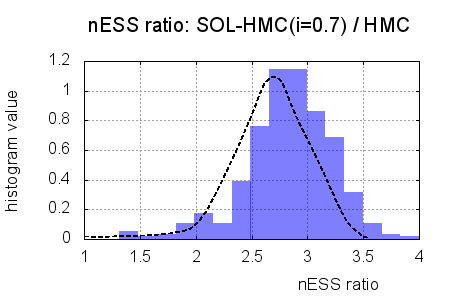}
  \vspace*{-12mm}
  \caption{}
\end{subfigure}
\caption{
Ratio of nESS for Horowtiz-bounce and SOL-HMC-bounce for target \textit{full-b} ($i=0.7$).
}
\label{fig:ilya_f7C}
\end{figure}

Finally, in Figure \ref{fig:ilya_f7D}, we examine SOL-HMC-bounce for \textit{full-a} (A), \textit{full-b} (B) and \textit{full-c} (C) for the same value of $i=0.4$. We see a clear improvement of the non-reversible SOL-HMC-bounce over HMC in each case.
We compare here to the SOL-HMC-bounce for \textit{full-b} for the same 
value of $i=0.4$ in (B). We observe a similar improvement for SOL-HMC-bounce over HMC in both cases.

\begin{figure}[h!]
\noindent
\centering
\begin{subfigure}{.32\textwidth}
  \centering
  \includegraphics[width=\linewidth]{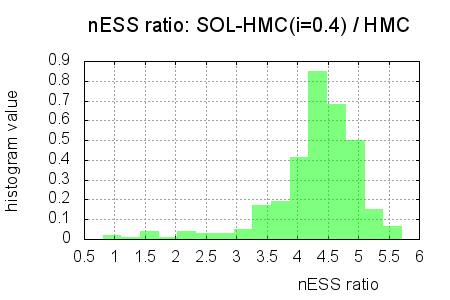}
  \vspace*{-12mm}
  \caption{}
\end{subfigure}%
\begin{subfigure}{.32\textwidth}
  \centering
  \includegraphics[width=\linewidth]{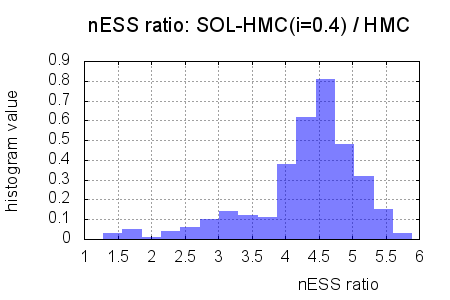}
  \vspace*{-12mm}
  \caption{}
\end{subfigure}%
\begin{subfigure}{.32\textwidth}
  \centering
  \includegraphics[width=\linewidth]{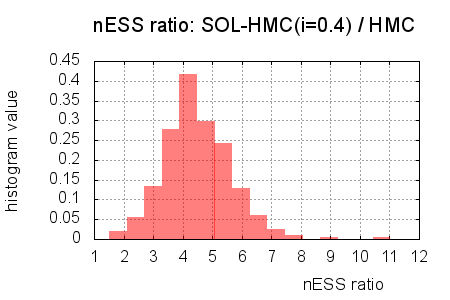}
  \vspace*{-12mm}
  \caption{}
\end{subfigure}
\caption{
Ratio of nESS for SOL-HMC-bounce for targets \textit{full-a} (A), \textit{full-b} (B) and \textit{full-c} (C) and in each case $i=0.4$.
}
\label{fig:ilya_f7D}
\end{figure}

\section{Conclusion}
We have investigated two different ways to deal with sampling measures on a bounded box $B$: rejection and bounces. This is crucial in many practical applications, for example to respect physical laws (such as porosity for reservoir modelling or pixel values in image reconstruction). We have explained and demonstrated why, for complex problems involving the use of a proxy, reflection algorithms should be preferred to rejection strategies. 
We have furthermore shown that when sampling from complex realistic target measures, such as those that arise in reservoir simulation,  non-reversible algorithms such as SOL-HMC and Horowitz outperform standard reversible algorithms such as HMC. In addition, we see that as the problem size grows SOL-HMC is superior to Horowitz having larger nESS.

\section*{Acknowledgements}
The work of I. Fursov and G. J. Lord was supported by the EPSRC EQUIP grant (EP/K034154/1).  P. Dobson was supported by the Maxwell Institute Graduate School in Analysis and its
Applications (MIGSAA), a Centre for Doctoral Training funded by the UK Engineering and Physical
Sciences Research Council (grant EP/L016508/01), the Scottish Funding Council, Heriot--Watt
University and the University of Edinburgh.

\section*{Appendix A}

This Appendix gathers some basic results about the SOL-HMC-bounce algorithm, presented in Section \ref{sec2}. Throughout we use the notation introduced in Section \ref{sec2}.  
\begin{prop}\label{prop:preservesinvmeas}
	The SOL-HMC-bounce algorithm with reflections preserves the target measure.
\end{prop}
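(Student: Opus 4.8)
The plan is to reduce the claim to three structural properties of one step of the chain and then verify each. Write one step of SOL-HMC-bounce as the composition of the refreshment map $\mathcal O^\varepsilon$ (taking $(x^k,p^k)$ to $(\hat x^k,\hat p^k)$) followed by a Metropolis step with proposal $\Psi:=\chi^\delta_{S,Bounce}=\Th^{\delta/2}\Rotref^{\delta}\Th^{\delta/2}$ and, on rejection, a momentum flip $\mathcal F(x,p):=(x,-p)$. As in the analysis of SOL-HMC and of generalised-reversible Metropolis schemes (see \cite[Lemma 1]{Sanz}, \cite{Ottirr, OPSP, Lelievre}), it suffices to check: \textup{(A)} $\mathcal O^\varepsilon$ leaves $\tilde\pi$ (restricted to $B\times\R^N$) invariant; \textup{(B)} $\Psi$ preserves Lebesgue measure on $B\times\R^N$; \textup{(C)} $\Psi$ is an involution up to momentum flip, i.e.\ $\Psi\circ\mathcal F\circ\Psi=\mathcal F$ at a.e.\ point of $B\times\R^N$. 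Once (A)--(C) hold, and since $H\circ\mathcal F=H$ while the acceptance ratio \eqref{eq:accrejprob} depends only on $H\circ\Psi-H$, the computation proving $\tilde\pi$-invariance of the one-step kernel is verbatim the one for SOL-HMC; marginalising in $p$ then yields invariance of $\pi$.

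Property (A) is immediate: $\mathcal O^\varepsilon$ fixes $x$ and updates $p$ by one step of the Ornstein--Uhlenbeck system \eqref{oe1}, whose invariant law is exactly the standard Gaussian appearing in \eqref{auxtarget}; since $\tilde\pi(dx,dp)=\pi(dx)\otimes N(0,I_N)(dp)$, the conditional law of $p$ given $x$ is preserved and $x$ is untouched. For (B), $\Th^{t}(x,p)=(x,p-tC\nabla V(x))$ is a shear, hence volume preserving, so it is enough that $\Rotref^{\delta}$ preserves $dx\,dp$. This holds because $\Rotref^{\delta}$ is the time-$\delta$ map of the linear Hamiltonian flow $\Rot^{t}$ (volume preserving, having zero divergence) on the box $B$ with specular reflection on $\partial B$: away from a Lebesgue-null set of initial data (trajectories that become tangent to a face, hit a lower-dimensional face or corner in a degenerate way, or would undergo infinitely many reflections in time $\delta$) the recursion defining $\Rotref^\delta$ terminates after finitely many steps, each step being a piece of $\Rot^{t}$ followed by some reflection $S_j$, and each $S_j$ is linear with $|\det S_j|=1$.

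The heart of the matter is (C), for which the key point is that every building block is reversed by $\mathcal F$ together with time reversal: $\mathcal F\,\Th^{t}\,\mathcal F=\Th^{-t}$ and $\mathcal F\,\Rot^{t}\,\mathcal F=\Rot^{-t}$, and, crucially, specular reflection is compatible with this symmetry. Concretely, if the forward $\Rotref^{\delta}$-trajectory from $(x,p)$ consists of arcs of $\Rot^{\cdot}$ of durations $t_1,\dots,t_m$ (with $\sum_\ell t_\ell=\delta$) separated by reflections $S_{j_1},\dots,S_{j_{m-1}}$ at boundary points $z_1,\dots,z_{m-1}$, and ends at $(x',p')$, then the $\Rotref^{\delta}$-trajectory started at $\mathcal F(x',p')=(x',-p')$ retraces the same geometric path in reverse: running $\Rot^{t_m}$ it reaches $z_{m-1}$ with the reflected incoming velocity, the reflection $S_{j_{m-1}}$ puts it back onto the preceding arc, and iterating one arrives after total time $\delta$ at $(x,-p)$; hence $\Rotref^{\delta}\circ\mathcal F\circ\Rotref^{\delta}=\mathcal F$. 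Combining, $\Psi\mathcal F\Psi=\Th^{\delta/2}\Rotref^{\delta}\bigl(\Th^{\delta/2}\mathcal F\Th^{\delta/2}\bigr)\Rotref^{\delta}\Th^{\delta/2}=\Th^{\delta/2}\Rotref^{\delta}\mathcal F\Rotref^{\delta}\Th^{\delta/2}=\Th^{\delta/2}\mathcal F\Th^{\delta/2}=\mathcal F$, using $\mathcal F\Th^{s}=\Th^{-s}\mathcal F$ twice. I expect the main obstacle to be making the retracing argument for the reflected flow fully rigorous — identifying and discarding the null set of pathological initial conditions (tangencies, corner hits handled by the paper's convention of applying $S_j$ for every active coordinate $j$, and the finiteness of the number of bounces in time $\delta$) — the remainder being bookkeeping with the symmetries above. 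The identical argument, with $\Rotref^{\delta}$ replaced by $\Posref^{\delta}$, covers HMC-bounce and Horowitz-bounce.
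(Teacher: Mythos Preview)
Your proposal is correct and follows essentially the same approach as the paper: both decompose one step into the OU refresh $\mathcal O^\varepsilon$ (which preserves $\tilde\pi$) followed by a Metropolis step with the reflected integrator, reduce invariance of the latter to volume-preservation and reversibility-under-momentum-flip of $\chi^\delta_{S,Bounce}$ via the standard result in \cite{Sanz}, and then check these two properties componentwise. Your condition (C), $\Psi\mathcal F\Psi=\mathcal F$, is exactly the paper's ``reversible under momentum flip'' ($\chi\circ S=S\circ\chi^{-1}$), and you supply more detail---the explicit retracing argument for $\Rotref^\delta$ and attention to the null set of pathological initial data---where the paper simply asserts that including reflections preserves the needed properties.
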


\begin{proof}
	It is easy to see  that the operator $\mathcal{O}^\varepsilon$ preserves the target measure $\tilde{\pi}$. Indeed $\mathcal{O}^\varepsilon$ leaves the $x$-variable untouched so, because $\tilde{\pi}$ is the product of $\pi(x)$ and a standard Gaussian in the $p$ variable, looking at the definition \eqref{oe1}-\eqref{oe2}  of  $\mathcal{O}^\varepsilon$, all one needs to show is that if $p$ is drawn from a standard Gaussian  then $\hat{p}:=pe^{-\varepsilon} + i\xi$ is also a Gaussian random variable -- here $\xi$ is a standard Gaussian independent of $p$. This is readily see as, by definition,  $\hat{p}$ has expectation $0$ and variance $1$,  since $e^{-2\varepsilon}+i^2=1$. Therefore if $(x,p)$ are drawn from $\tilde{\pi}$ then $\mathcal{O}^\varepsilon(x,p)=(x,\hat{p})$ is also distributed according to $\tilde{\pi}$. 
		
		Let $\chi=\chi^\delta_{S,bounce}$ denote the integrator described in the SOL-HMC-bounce algorithm. Since $\mathcal{O}^{\varepsilon}$ preserves the target measure $\tilde{\pi}$ it remains to show that the combination of the integrator $\chi$ and the accept--reject mechanism preserves the target measure. 
		
		It is well known, for instance see \cite[Theorem 9]{Sanz}, that if the integrator $\chi^\delta_{S,bounce}$ is {\em reversible under momentum flip} (that is,  $\chi^\delta_{S,bounce}\circ S= S\circ (\chi^{\delta}_{S,bounce})^{-1}$
		where $S(x,p)=(x,-p)$) and volume preserving then the composition of  $\chi^\delta_{S,bounce}$ and of the  accept-reject move satisfies the detailed balance equation. In particular, this step also preserves the target measure $\tilde{\pi}$. 
		
		Therefore it is sufficient to show that $\chi^\delta_{S,bounce} = \Th^{\delta/2}\circ \Rotref \circ \Th^{\delta/2}$ is reversible under momentum flip  and volume preserving. Note that both $\Th^{\delta}$ and $\Rot^\delta$ are flows corresponding to a Hamiltonian system so they must be reversible and volume preserving, see \cite[Section 8.2.2 and 8.2.3]{Sanz}. The composition of these operators also has these two properties and including reflection preserves these two properties, therefore $\Rotref$ is volume preserving and reversible, and hence so is $\chi^\delta_{S,bounce}$. 
\end{proof}

\begin{prop} \label{prop:irreversible}
The SOL-HMC-bounce algorithm defined in Section \ref{sec2} is non-reversible.
\end{prop}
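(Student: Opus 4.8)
The plan is to prove non-reversibility by exhibiting one target measure and one choice of tuning parameters for which the chain $\{(x^k,p^k)\}_k$ generated by SOL-HMC-bounce violates detailed balance with respect to $\tilde\pi$. (Only the joint chain on the extended space is a candidate for reversibility here: unlike in HMC, the marginal $\{x^k\}_k$ is not Markov once the momentum is only partially refreshed.) Since detailed balance with respect to $\tilde\pi$ is equivalent to self-adjointness of the transition operator $P$ on $L^{2}(\tilde\pi)$, it suffices to exhibit a single observable $f$ with $Pf\neq P^{*}f$. In the language of \cite{OPSP, Ottirr} this is the statement that the momentum-flip-and-partial-refresh mechanism destroys reversibility; the only new point is that adding reflections does not restore it.

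First I would specialise to $N=1$, $V\equiv 0$ and $C=2$, so that $\tilde\pi$ is the product of a standard Gaussian on $[-a,a]$ (in $x$) and a standard Gaussian (in $p$). With $V$ constant, $\Th^{t}$ is the identity, the integrator collapses to $\chi^{\delta}_{S,bounce}=\Rotref^{\delta}$, and the reflected harmonic rotation $\Rotref^{\delta}$ conserves $H(x,p)=\tfrac12(x^{2}+p^{2})$ (the plain rotation $\Rot^{\delta}$ preserves this $H$, and each boundary reflection $S_j$ flips a momentum coordinate and hence preserves $\tfrac12 p^{2}$). Consequently the acceptance probability in \eqref{eq:accrejprob} is identically $1$ and no momentum flip is ever triggered, so the one-step dynamics is simply $(x^{k+1},p^{k+1})=\Rotref^{\delta}(\mathcal{O}^{\varepsilon}(x^{k},p^{k}))$ and its transition operator factorises as $P=\mathcal{O}\mathcal{K}$, where $\mathcal{O}f(x,p)=\mathbb{E}[f(x,pe^{-\varepsilon}+i\xi)]$ is the operator of the Ornstein--Uhlenbeck momentum step \eqref{oe2} and $\mathcal{K}f=f\circ\Rotref^{\delta}$ is the Koopman operator of the deterministic, $\tilde\pi$-preserving bijection $\Rotref^{\delta}$ (Proposition~\ref{prop:preservesinvmeas}). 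Because $\tilde\pi$ is a product measure and the OU semigroup is reversible for its Gaussian stationary law, $\mathcal{O}=\mathcal{O}^{*}$ on $L^{2}(\tilde\pi)$; and $\mathcal{K}^{*}$ is the Koopman operator of $(\Rotref^{\delta})^{-1}=S\circ\Rotref^{\delta}\circ S$, using that $\Rotref^{\delta}$ is \emph{reversible under momentum flip} (established in the proof of Proposition~\ref{prop:preservesinvmeas}), where $S(x,p)=(x,-p)$. Reversibility of the chain is then equivalent to the operator identity $\mathcal{O}\mathcal{K}=\mathcal{K}^{*}\mathcal{O}$.

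Next I would test this identity on $f(x,p)=p$, working near an interior point $(x_0,0)$ with $0<x_0<a$ and taking the box large relative to $\delta$, so that $\Rotref^{\delta}$ and $(\Rotref^{\delta})^{-1}$ coincide, respectively, with $\Rot^{\delta}$ and $\Rot^{-\delta}$ except on a set of arbitrarily small $\tilde\pi$-mass, where $\Rot^{\pm\delta}(x,p)=(x\cos\delta\pm p\sin\delta,\ \mp x\sin\delta+p\cos\delta)$. A direct computation from \eqref{oe2} then yields $(\mathcal{O}\mathcal{K}f)(x,p)=-x\sin\delta+pe^{-\varepsilon}\cos\delta$ and $(\mathcal{K}^{*}\mathcal{O}f)(x,p)=e^{-\varepsilon}(x\sin\delta+p\cos\delta)$, up to the negligible boundary contribution. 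The coefficients of $x$ are $-\sin\delta$ and $e^{-\varepsilon}\sin\delta$, which differ for every $\delta\notin\pi\mathbb{Z}$ since $e^{-\varepsilon}>0$. Hence $P\neq P^{*}$, so this instance of SOL-HMC-bounce is not reversible, and the proposition follows.

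The step I expect to require the most care is controlling the reflection map in that last computation: the factorisation $P=\mathcal{O}\mathcal{K}$ and the identity $\mathcal{K}^{*}=\text{Koopman of }(\Rotref^{\delta})^{-1}$ are exact, but the explicit formulas for $Pf$ and $P^{*}f$ reduce to the plain-rotation expressions only away from the wall, so one must check that, with the box large and the test sets confined to the interior, the wall-hitting contribution is a controllably small error that cannot cancel the $\sin\delta$-sized gap. If one prefers to sidestep reflections altogether, the cleanest variant is to prove non-reversibility first for SOL-HMC (formally the case $a=+\infty$), where $\Rotref^{\delta}=\Rot^{\delta}$ identically and the inequality $-\sin\delta\neq e^{-\varepsilon}\sin\delta$ is exact, and then note that SOL-HMC-bounce agrees with SOL-HMC whenever no proposal leaves the box, hence cannot be reversible either.
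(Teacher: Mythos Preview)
Your approach is correct but differs from the paper's in both strategy and in how reflections are handled.

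The paper works in the same special case ($N=1$, $V\equiv 0$, standard Gaussian target) but proceeds by computing the transition kernel \emph{explicitly} as a density on $\R^2$. The key structural observation is that, after the OU step $(x,p)\mapsto(\hat x,\hat p)=(x,pe^{-\varepsilon}+i\xi)$, the deterministic rotation fixes a one-dimensional constraint between the accepted target coordinates: from $(x,p)$ one can only reach $(y,q)$ with $q=(y\cos\delta-x)/\sin\delta$, so the kernel carries a Dirac factor $\delta_{(y\cos\delta-x)/\sin\delta}(q)$. Detailed balance would force the reverse kernel to be singular on the \emph{same} line, but the reverse constraint is $p=(x\cos\delta-y)/\sin\delta$; choosing $(x,p)=(0,0)$ and any $(y,q)=(q\tan\delta,q)$ with $q\neq 0$ makes the forward density positive and the reverse density zero. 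This kills detailed balance without any limiting argument.

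The paper also sidesteps the reflection issue you flag at the end: since $\Rotref^\delta$ preserves $x^2+p^2$, restricting attention to target points with $y^2+q^2<a^2$ guarantees that the pre-image under $\Rotref^\delta$ lies in the same ball, so $\Rotref^\delta=\Rot^\delta$ there \emph{exactly}, not just approximately. No large-box limit is needed.

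Your operator-theoretic route---factorise $P=\mathcal{O}\mathcal{K}$, use $\mathcal{O}^*=\mathcal{O}$ and $\mathcal{K}^*=\mathcal{K}_{(\Rotref^\delta)^{-1}}$, then test on $f(x,p)=p$---is clean and gives the same conclusion, but as you acknowledge, evaluating $Pf$ and $P^*f$ at a fixed point still involves an expectation over \emph{all} $\hat p$, some of which trigger reflections; your control of that contribution via a large-$a$ limit is correct (the reflected momentum is bounded by $\sqrt{x^2+\hat p^2}$, so the tail is integrable), but it is an extra step the paper avoids. Your final fallback (``prove it for $a=+\infty$ and note the algorithms agree when no proposal leaves the box'') is not a valid shortcut as stated: agreement on an event of probability less than one does not transfer non-reversibility, since the kernels still differ globally. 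The large-$a$ continuity argument you sketch just before it is the right way to make that idea rigorous.
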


\begin{proof}[Proof of Proposition \ref{prop:irreversible}]
For simplicity we will only consider the case when $N=1, C=1$ and $V(x)=0$. That is, we consider the target measure to be the ``truncation of a standard two dimensional Gaussian", namely 
$$
\hat{\pi}(x,p) = \frac{1}{Z_a} e^{-\frac{1}{2}(x^2+y^2)} \mathbbm{1}_{[-a,a]}(x), 
$$
where $Z_{a}$ is a normalising constant. In this case $\Th^{\delta/2}$ is the identity map, and $\Rot^\delta$ can be written as
\be
\Rot^{\delta}(x,p) = (x\cos(\delta)+p\sin(\delta), p\cos(\delta)-x\sin(\delta)).
\ee
With these observations, if at time $k$ the chain is $(x^k,p^k)$ then we can write the proposed move $(\tilde{x}^{k+1},\tilde{p}^{k+1})$ in the $k+1$-th step of SOL-HMC-bounce as: 
$$
(\tilde{x}^{k+1},\tilde{p}^{k+1}) = \Rotref^\delta(x^k,p^ke^{-\varepsilon}+i\xi).
$$
where $\xi$ is drawn from a standard normal distribution.
In this case the acceptance probability is given by
$$
\alpha = \min(1, e^{-\frac{1}{2}((\tilde{x}^{k+1})^2+(\tilde{p}^{k+1})^2-(x^k)^2-(p^ke^{-\varepsilon}+i\xi)^2)}).
$$
Now we wish to calculate the transition kernel, $\K((x,p),(y,q))$, for this Markov chain, i.e. find the probability density corresponding to the move from $(x,p)$ to $(y,q)$.
 
Observe that $\Rot^\delta$ is a rotation about the origin and hence preserves radial distance, that is if $(\tilde{x},\tilde{p}):=\Rot^{\delta}(x,p)$  then
\be
	\tilde{x}^2+\tilde{p}^2=x^2+p^2.
\ee
Flipping momentum sign, i.e. applying reflections $S$,  also preserve radial distance, therefore the operator $\Rotref^\delta$ preserves radial distance. In particular, if $\tilde{x}^2+\tilde{p}^2< a^2$ (or equivalently $x^2+p^2<a^2$) then $\Rot^\delta(x,p)$ must remain in the strip $[-a,a]\times \R$, so in this situation $\Rotref^\delta(x,p)=\Rot^\delta(x,p)$. 

Suppose that $y^2+q^2\leq a^2$. Fix some $x\in[-a,a],p\in\R$, then let $(\hat{x},\hat{p})=\mathcal{O}^\varepsilon(x,p)=(x,pe^{-\varepsilon}+i\xi)$, where $\xi$ is a standard normal random variable. In which case we have that $\hat{p}$ is normally distributed with mean $pe^{-\varepsilon}$ and variance $i^2$. Set $(y,q)=\Rot^\delta(\hat{x},\hat{p})$, then
\be
(y,q)=(\hat{x}\cos(\delta)+\hat{p}\sin(\delta), \hat{p}\cos(\delta)-\hat{x}\sin(\delta)).
\ee
Therefore $y$ is normally distributed with mean $x\cos(\delta)+pe^{-\varepsilon}\sin(\delta)$ and variance $i^2\sin(\delta)^2$. Once $y$ has been determined we may solve for $q$ and find
\be\label{eq:forwardcond}
q=\frac{y\cos(\delta)-x}{\sin(\delta)}.
\ee
In which case the transition kernel is given by
\begin{align*}
\K((x,p),(y,q)) &= \frac{1}{\sqrt{2\pi i^2 \sin(\delta)^2}} e^{-\frac{(y-x\cos(\delta)-pe^{-\varepsilon}\sin(\delta))^2}{2i^2\sin(\delta)^2}}\alpha \delta_\frac{y\cos(\delta)-x}{\sin(\delta)}(q)\\
& +(1-\alpha)\delta_x(y)\frac{1}{\sqrt{2\pi i^2}}e^{-\frac{(q+pe^{-\varepsilon})^2}{2i^2}}
\end{align*}
where $\alpha$ is the acceptance probability and is given by
\be
\alpha = \min(1,e^{\frac{1}{2}(x^2+p^2-y^2-q^2)}).
\ee

Now the algorithm is reversible if and only if the detailed balance condition holds, that is
\begin{equation}\label{eq:DB}
\hat{\pi}(x,p)\K((x,p),(y,p)) = \hat{\pi}(y,q)\K((y,q),(x,p)), \quad \forall x,y \in[-a,a], p,q\in\R.
\end{equation} 
To see that this does not hold consider the point $(x,p)=(0,0)$ and let $(y,q)$ be some point in the ball of radius $a$. Then by \eqref{eq:forwardcond} we must have
$y=q\tan(\delta)$, and the left hand side of \eqref{eq:DB} becomes 
\begin{align*}
\hat{\pi}(0,0)\K((0,0),(q\tan(\delta),q)) = \frac{1}{\sqrt{2\pi}}\frac{1}{\sqrt{2\pi i^2 \sin(\delta)^2}} e^{-\frac{(q\tan(\delta))^2}{2i^2\sin(\delta)^2}}\min(1,e^{-\frac{1}{2}(q^2+q^2\tan(\delta)^2)}) >0.
\end{align*}
On the other hand, if we suppose $0<\delta<\pi/4$ then to move from $(q\tan(\delta),q)$ to $(0,0)$ is not possible unless $q=0$, since \eqref{eq:forwardcond} in this case becomes
$
q\tan(\delta)=0
$.
Therefore for any $q\neq0$ the right hand side of \eqref{eq:DB} must be zero, in particular we have that the algorithm is not reversible.
\end{proof}

\section*{Appendix B: Description of Reservoir Model and Simulator}
The simulator we use is an \textit{in-house} single phase simulator working on an unstructured grid with finite volumes spatial discretization and backward Euler time discretization, calculating the dynamics of pressures and fluid flows in the subsurface porous media. 

To obtain the observed pressure data, a fine grid three-phase model was run in the first place, using Schlumberger Eclipse black oil simulator \cite{Manual}. 
The resulting output Eclipse pressures were perturbed by the uncorrelated Gaussian noise, with standard deviation \(\sigma_{BHP}\) = 20 bar for the well BHP data, and \(\sigma_{b}\) = 3 bar for the reservoir (block) pressure data. Altogether 380 measurement points were considered (365 for the BHP, 15 for the reservoir pressure), taken with time step of 6 months.
The data errors covariance matrix \(C_d\) is diagonal, with the entries equal to either \(\sigma_{BHP}^2\) or \(\sigma_{b}^2\).

In the forward simulation mode, the reservoir properties are fixed, the producing and injecting wells (indexed by $w$) are controlled by the volumetric flow rates $q_w$, and the output modelled data are the time-dependent pressures at the blocks $P_{\ell}$ and the bottom-hole pressures at the wells $P_w^{\text{BHP}}$. The equations describing the fluid flow are as follows. First, the volumetric flow rate $Q_{\ell j}$ between the pair of connected blocks $\ell,j$ is proportional to the pressure difference between them, which can be regarded as Darcy's law:

\begin{align}
Q_{\ell j} = T_{\ell j}(P_{\ell} - P_j - \rho \, g\, h_{\ell j}),
\label{eq:ilya_sim1}
\end{align}
where $\rho$ is the known liquid density, $h_{\ell j}$ is the known depth difference between the block centers, and $g$ is the acceleration due to gravity.

The inflow $q_{w \ell}$ into the perforation of well $w$ in block $\ell$ is proportional to the difference of the bottom-hole pressure (BHP) and the block pressure:

\begin{align}
q_{w \ell} = J_{w \ell}(P_{\ell} - P_w^{\text{BHP}} - \rho \, g \, h_w^{\ell}),
\label{eq:ilya_sim2}
\end{align} 
where $h_w^{\ell}$ is the depth difference between the block center and the BHP reference depth. The total inflow into well $w$ is obtained by summing up contributions related to this well; that is, 
\begin{align}
q_{w} = \sum_{\ell} q_{w \ell}.
\label{eq:ilya_sim3}
\end{align} 

Finally, the volumetric inflows and outflows for block $\ell$ are balanced, with the excessive/deficient fluid volume leading to the block pressure change via the following compressibility equation:
\begin{align}
c_{\ell} \, V_{\ell} \, \frac{\partial P_{\ell}}{\partial t} = \sum_j Q_{j \ell} - \sum_w q_{w \ell},
\label{eq:ilya_sim4}
\end{align} 
where $t$ denotes time, and the first (second, respectively) summation on the right hand side is taken over all the blocks $j$ connected to the block $\ell$  (all the wells $w$ perforated in block $\ell$, respectively). The compressibility $c_{\ell}$ of the block is supposed to be known. The simulated reservoir time spans 12 years.



\begin{thebibliography}{99}
\bibitem{Manual}Eclipse -- Industry Reference Reservoir Simulator, Reference Manual. Version 2015.1

\bibitem{cit3}
A.~Bouchard-C\^{o}te, A. Doucet and S.J. Vollmer.
\newblock The bouncy particle sampler: A non-reversible rejection-free Markov
  Chain Monte Carlo method.
\newblock {\em submitted}, 2015.

\bibitem{cit4}
A.B.~Duncan, T.~Lelievre, and G.A.~Pavliotis.
\newblock Variance reduction using nonreversible Langevin samplers.
\newblock {\em Journal of Statistical Physics}, 163(3):457--491, 2016.

\bibitem{Duncan} J. Bierkens, A. Bouchard-Cote, A. Doucet, A. B. Duncan, P. Fearnhead, T. Lienart, G. Roberts, S. J. Vollmer. {\em Piecewise Deterministic Markov Processes for Scalable Monte Carlo on Restricted Domains}, arxiv preprint, 2018


\bibitem{Lelievre} T. Lelievre, M. Rousset, G. Stoltze. {\em Free energy computations: a mathematical perspective}. Imperial College Press, London,  2010. 

\bibitem{MA}Y.-A. Ma, E.B. Fox, T. Chen, L. Wu.   {\em Irreversible samplers from jump and continuous Markov processes}
Statistics and Computing, 1-26, 2018. 

\bibitem{Ottirr}
M. Ottobre. {\em Markov Chain Monte Carlo and Irreversibility}. Reports on Math. Phys. (2016)

\bibitem{OPSP} M. Ottobre, N. Pillai, F. Pinski, A. M. Stuart. {\em A Function Space HMC Algorithm with second order Langevin diffusion limit. } Bernoulli, 2016

\bibitem{mio3}M. Ottobre, N. Pillai and K. Spiliopoulos. {\em Optimal Scaling of the MALA algorithm with irreversible proposals for Gaussian targets},   arxiv 1702.01777

\bibitem{InfHMC} A. Beskos, F. Pinski, J. M. Sanz-Serna, A. M. Stuart. {\em Hybrid Monte Carlo on Hilbert Spaces. } Stoch. Proc. Appl., 2011. 

\bibitem{BFGS} R. Fletcher. {\em Practical Methods of Optimization.}  Wiley-Interscience publication, 2nd edition, 2000.

\bibitem{hor:91}
A.M.~Horowitz.
\newblock A generalized guided {Monte Carlo} algorithm.
\newblock {\em Physics Letters B}, 268(2):247--252, 1991.


\bibitem{RadfordNeal} Radford M Neal. {\em MCMC using Hamiltonian dynamics.} Handbook of Markov Chain Monte Carlo,
2010.

\bibitem{cit1}
L. Rey-Bellet and K. Spiliopoulos.
\newblock Irreversible Langevin samplers and variance reduction: a large
  deviations approach.
\newblock {\em Nonlinearity}, 28(7):2081--2103, 2015.

\bibitem{cit2}
L. Rey-Bellet and K. Spiliopoulos.
\newblock Variance reduction for irreversible Langevin samplers and diffusion
  on graphs.
\newblock {\em Electronic Communications in Probability}, 20, 2015.



\bibitem{Adjoint} J. R. P. Rodrigues {\em Calculating derivatives for automatic history matching.} Computational Geosciences, 2006.


\bibitem{Sanz}
J.M. Sanz-Serna.  {\em Markov chain Monte Carlo and numerical differential equations.} Current challenges in stability issues for numerical differential equations. Springer, Cham, 2014. 39-88.




\bibitem{Villani} C. Villani. {\em Hypocoercivity.} Mem. Amer. Math. Soc., 202 (950), 2009.
\end{thebibliography}
\end{document}